\documentclass[twocolumn,10pt,superscriptaddress,amsmath,amssymb,aps,pre,floatfix,longbibliography]{revtex4-2}

\usepackage{graphicx}% Include figure files
\usepackage{bm}% bold math
\usepackage{times}
\usepackage{comment}
\usepackage{diagbox}
\usepackage{amsmath,amssymb,amsthm}
\usepackage{tabularx}
\usepackage{array}
\usepackage{makecell}
\usepackage{tikz}
\usepackage{multirow}
\usepackage{xcolor}
\usepackage{enumerate}
\usepackage[colorlinks=true,linkcolor=blue,citecolor=red, linktocpage=true,breaklinks=true]{hyperref}

\newtheorem{theorem}{Theorem}
\newtheorem{lemma}[theorem]{Lemma}

\makeatletter
\newcommand{\Rmnum}[1]{\expandafter\@slowromancap\romannumeral #1@}
\makeatother

\renewcommand{\arraystretch}{1.5}

\begin{document}

\title{Minimal nonintegrable models with three-site interactions}

\author{Wen-Ming Fan}
\affiliation{School of Physics, Northwest University, Xi'an 710127, China}

\author{Kun \surname{Hao}}
\affiliation{Institute of Modern Physics, Northwest University, Xi'an 710127, China}
\affiliation{Shaanxi Key Laboratory for Theoretical Physics Frontiers, Xi'an 710127, China}
\affiliation{Peng Huanwu Center for Fundamental Theory, Xi'an 710127, China}
\affiliation{Fundamental Discipline Research Center for Quantum Science and Technology of Shaanxi Province, Xi'an 710127, China}

\author{Yang-Yang Chen}
\affiliation{Institute of Modern Physics, Northwest University, Xi'an 710127, China}
\affiliation{Shaanxi Key Laboratory for Theoretical Physics Frontiers, Xi'an 710127, China}
\affiliation{Peng Huanwu Center for Fundamental Theory, Xi'an 710127, China}
\affiliation{Fundamental Discipline Research Center for Quantum Science and Technology of Shaanxi Province, Xi'an 710127, China}

\author{Xiao-Hui Wang}
%\email{xhwang@nwu.edu.cn}
\affiliation{School of Physics, Northwest University, Xi'an 710127, China}
\affiliation{Shaanxi Key Laboratory for Theoretical Physics Frontiers, Xi'an 710127, China}
\affiliation{Peng Huanwu Center for Fundamental Theory, Xi'an 710127, China}
\affiliation{Fundamental Discipline Research Center for Quantum Science and Technology of Shaanxi Province, Xi'an 710127, China}

\author{Kun Zhang}
\email{kunzhang@nwu.edu.cn}
\affiliation{School of Physics, Northwest University, Xi'an 710127, China}
\affiliation{Shaanxi Key Laboratory for Theoretical Physics Frontiers, Xi'an 710127, China}
\affiliation{Peng Huanwu Center for Fundamental Theory, Xi'an 710127, China}
\affiliation{Fundamental Discipline Research Center for Quantum Science and Technology of Shaanxi Province, Xi'an 710127, China}

\author{Vladimir \surname{Korepin}}
\affiliation{C.N. Yang Institute for Theoretical Physics, Stony Brook University, New York 11794, USA}

\begin{abstract}
We study integrability breaking in translationally invariant spin-$1/2$ chains with genuine three-site interactions. Using a two-qubit composite representation, we prove that the deformed Fredkin spin chain is nonintegrable for any nonzero deformation parameter, although its Hamiltonian decomposes into coupled integrable building blocks. We then extract the minimal injective models, defined as the simplest models needed for a rigorous nonintegrability test, from the Hamiltonian density of the deformed Fredkin spin chain. Among the sixteen such models, six satisfy the Reshetikhin condition and are integrable, while two satisfy Hokkyo's criterion and are nonintegrable away from special coefficient values. Returning to one-site translation-invariant spin-$1/2$ chains leaves four genuine three-site models: two integrable models and two generically nonintegrable models. These results give a compact classification of minimal three-site models obtained from the deformed Fredkin spin chain and identify a local mechanism by which coupling individually integrable structures can destroy integrability beyond nearest-neighbor interactions.
\end{abstract}

\maketitle

\section{Introduction}\label{sec:intro}

Integrable quantum many-body systems occupy a distinguished position in modern theoretical physics because their dynamics is strongly constrained by extensive families of local conserved charges and, in many cases, is exactly solvable \cite{takahashi1999thermodynamics,korepin1997quantum,baxter2016exactly}. They provide benchmarks for quantum phase transitions and critical phenomena \cite{lieb1968absence,sachdev1999quantum,kitanine2011form}. 
Their conserved charges also prevent conventional thermalization and lead to distinctive nonequilibrium behavior \cite{rigol2007relaxation,rigol2008thermalization,Pozsgay2013GeneralizedGibbsHeisenberg,langen2015experimental,essler2016quench,vidmar2016generalized}. Beyond spin chains, integrable structures appear in quantum circuits \cite{vanicat2018integrable,Aleiner2021,miao2024floquet,paletta2025integrabilitygeometry}, open quantum dynamics \cite{2016Exact,2019Dissipative,2020Yang,deLeeuw2021ConstructingIL}, and high-energy physics \cite{zamolodchikov1989integrable,fock2000duality,minahan2003bethe,staudacher2005factorized,beisert2012review,cavaglia2016mathrm}. 
Thus integrability is not a generic property of interacting quantum matter, but an exceptional organizing principle for many-body systems.

Understanding how this structure breaks down is therefore as important as constructing integrable models. 
Nonintegrable quantum many-body systems are expected to describe generic interacting dynamics, yet analytical criteria for proving nonintegrability have only recently become systematic. 
Physically, nonintegrable models differ from integrable ones through chaotic dynamics and thermalization consistent with the eigenstate thermalization hypothesis (ETH) \cite{deutsch1991quantum,srednicki1994chaos,d2016quantum}. This distinction connects information scrambling and thermalization \cite{srdinvsek2021signatures,nunez2018integrability} with quantum-information applications such as benchmarking quantum simulators and quantum computations \cite{maruyoshi2023conserved,yao2021adaptive}.

%At the physical level, nonintegrable models display qualitatively different behavior from integrable ones, including chaotic dynamics and thermalization consistent with the eigenstate thermalization hypothesis (ETH) \cite{deutsch1991quantum,srednicki1994chaos,d2016quantum}. These properties underlie a wide range of phenomena, from information scrambling and thermalization in quantum field theories \cite{srdinvsek2021signatures,nunez2018integrability}, black-hole physics \cite{shenker2014black}, and string-theoretic models \cite{giataganas2014non,de2021free,nunez2018non}, to practical roles in quantum information science such as benchmarking quantum simulators and quantum computations \cite{maruyoshi2023conserved,yao2021adaptive}.

For one-dimensional spin chains with nearest-neighbor interactions, integrability and nonintegrability are now comparatively well understood. On the integrable side, the Bethe ansatz and the quantum inverse scattering method provide powerful constructive frameworks \cite{Faddeev1996How,korepin1997quantum}. 
On the nonintegrable side, rigorous methods initiated by Shiraishi \cite{shiraishi2019proof} and based on the absence of local conserved quantities have led to complete classifications for broad classes of nearest-neighbor spin-$1/2$ chains \cite{yamaguchi2024classification,yamaguchi2024proof}, and have since been applied to many other spin models \cite{chiba2024proofIsing,chiba2025proofhighdimension,futami2025absence,shiraishi2024s,Futami2025AbsenceON,shiraishi2024absenceXIX,shiraishi2025complete,Ishii2026ProofHolstein,hokkyo2024proof,park2025proofspin1,Shiraishi2025DichotomyTS}. 

By contrast, systems with longer-range interactions remain much less systematically explored. 
Among them, spin chains with genuine three-site interactions constitute the minimal extension beyond the nearest-neighbor paradigm. Although several integrable three-site models are known, including free-fermions-in-disguise (FFD) models \cite{fendley2019free_in}, the Bariev model \cite{bariev1991integrable}, and the folded XXZ chain \cite{zadnik2021folded1,zadnik2021folded2}, prominent nonintegrable examples such as the PXP model \cite{park2025graph,park2025nonintegrability}, extended Ising models \cite{udupa2023weak}, and the Fredkin spin chain \cite{fan2025absence} show that a general structural understanding is still lacking.

The central question addressed in this work is therefore:
what is the minimal local structure required to destroy integrability in translationally invariant spin-$1/2$ chains with genuine three-site interactions?
We focus on Hamiltonians built from Pauli-string three-site terms and call a model minimally nonintegrable if it satisfies three requirements. 
First, it contains genuine three-site interactions. 
Second, it saturates the minimal number of terms required by injectivity, a structural assumption in Hokkyo's nonintegrability criterion \cite{hokkyo2025rigorous}. Third, it is nonintegrable, whereas removing either three-site term leaves an integrable model. In the examples studied here, the reduced models are FFD integrable. This notion is motivated by recent rigorous results showing that injectivity and low-order conservation laws strongly constrain the boundary between integrability and nonintegrability \cite{hokkyo2025rigorous,hokkyo2025integrability}. 

%Our analysis builds on recent progress in rigorous criteria for integrability and nonintegrability. For nearest-neighbor Hamiltonians, Shiraishi established a general strategy for proving the absence of local conserved charges \cite{shiraishi2019proof}, although its direct implementation can be computationally involved. This approach and its variants have since been applied to many models \cite{chiba2024proofIsing,shiraishi2024absenceXIX,shiraishi2025complete,park2025graph,park2025nonintegrability,hokkyo2024proof,park2025proofspin1,Shiraishi2025DichotomyTS,futami2025absence,shiraishi2024s,chiba2025proofhighdimension,Futami2025AbsenceON,fan2025absence}. A substantial simplification was recently obtained by Hokkyo \cite{hokkyo2025rigorous}: for nearest-neighbor Hamiltonians satisfying injectivity and the $2$-local conservation condition, nonintegrability can be certified by ruling out an admissible $3$-local conserved charge. Conversely, the Reshetikhin or boost-generated $3$-local conservation law implies an infinite hierarchy of commuting local charges, and hence integrability \cite{hokkyo2025integrability}. Although these criteria are formulated for nearest-neighbor interactions, they can be applied to three-site spin-$1/2$ Hamiltonians after grouping two physical spins into one composite site of local dimension four. 

A natural starting point for this program is the deformed Fredkin spin chain \cite{salberger2017deformed}. 
This model generalizes the original Fredkin chain \cite{salberger2018fredkin}, whose ground state is an exactly known, highly entangled superposition of Dyck paths, by introducing a deformation parameter that changes the entanglement structure \cite{salberger2018fredkin,zhang2017entropy}. 
Here we focus not on its ground-state physics, but on the structure of its Hamiltonian. 
In Pauli-matrix form, the Hamiltonian decomposes into several individually integrable building blocks, including XXZ-type terms, Ising-type interactions, and multiple FFD models \cite{fendley2019free_in}. 
The deformed Fredkin chain therefore provides a controlled setting for identifying irreducible mechanisms of integrability breaking. 
Although the nonintegrability of the original Fredkin chain has been rigorously established \cite{fan2025absence}, the corresponding question for the deformed Fredkin chain has remained open.

In this work, we use the deformed Fredkin Hamiltonian as a structured source of minimal three-site models and combine charge-based criteria with the frustration-graph diagnostic for FFD models \cite{elman2021free_behind}. We first prove that the spin-$1/2$ deformed Fredkin chain with periodic boundary conditions is nonintegrable for any nonzero deformation parameter. We then classify the minimal injective models in its composite two-site density: among sixteen models, six satisfy the Reshetikhin condition, two are generically nonintegrable by Hokkyo's criterion, and the remaining eight are not resolved by the direct two-local test; see Table~\ref{table:min_H_classfication}. Imposing the unblocking condition for one-site translation-invariant spin-$1/2$ chains leaves four genuine three-site models: two FFD-type integrable models and two mixed models that are nonintegrable away from special parameter values, as summarized in Table~\ref{tab:minimal_models_summary}. The corresponding frustration graphs give a complementary explanation of this conserved-charge-based classification.

The remainder of the paper is organized as follows. 
In Sec.~\ref{Sec:integrable_nonintegrable_conditions}, we introduce the notation and summarize the analytical criteria for integrability and nonintegrability used throughout the paper. 
Section~\ref{sec:nonintegrable_test_deform_Fredkin_model} applies these criteria to the deformed Fredkin chain and proves its nonintegrability. 
Building on this result, Sec.~\ref{sec:minimal_models_three_site} constructs the minimal composite models and derives the unblocking condition for one-site translation-invariant three-site spin-$1/2$ chains. Section~\ref{sec:integrability_minimal_composite} classifies the resulting minimal models. Finally, Sec.~\ref{Sec:test_frustration_graph} gives the complementary frustration-graph analysis, and Sec.~\ref{sec:conclusion} contains the conclusions and outlook. The appendices provide supplementary analyses of the theorems presented in the main text.

\section{Integrability and nonintegrability criteria}
\label{Sec:integrable_nonintegrable_conditions}

This section collects the definitions and criteria used throughout the paper.  We first fix the notation and the local-operator spaces in Sec.~\ref{Sec:Hokkyo_model}.  We then review Hokkyo's nonintegrability criterion \cite{hokkyo2025rigorous} in Sec.~\ref{Sec:sufficient_nonintegrability} and the Reshetikhin integrability criterion \cite{hokkyo2025integrability} in Sec.~\ref{Sec:Reshetikhin_condition}.  Finally, Sec.~\ref{Sec:frustration_graph} summarizes the frustration-graph method used as a complementary integrability diagnostic \cite{elman2021free_behind}.

%We work with a charge-based notion of integrability: an interacting spin chain is called integrable if it possesses an infinite family of nontrivial local conserved charges.  The main nonintegrability tool is Hokkyo's rigorous criterion \cite{hokkyo2025rigorous}, which reduces the absence of higher local conserved charges to a finite calculation involving admissible three-local operators, provided that two structural assumptions, injectivity and the two-local conservation condition, are satisfied.  In the opposite direction, we use the recent proof that a Reshetikhin, or boost-generated, three-local conservation law implies an infinite tower of commuting local charges \cite{hokkyo2025integrability}.  Finally, the frustration-graph method \cite{elman2021free_behind} will be used as a complementary diagnostic: it gives a sufficient graphical condition for free-fermion solvability and integrability, and will help clarify why the minimal models found below separate into integrable and nonintegrable classes.

\subsection{Notation and general setting}
\label{Sec:Hokkyo_model}

There is no universally accepted definition of quantum integrability \cite{caux2011remarks}.  In this work we use the following operational definition: a translationally invariant local Hamiltonian is integrable if it admits infinitely many linearly independent nontrivial local conserved charges.  The Hamiltonian itself is conserved, but it is regarded as a trivial local conserved charge.  Operators whose range is larger than half of the finite chain are not counted as local, since such operators may have no well-defined thermodynamic-limit interpretation.

Consider a chain of $N$ sites with periodic boundary conditions.  Let $\mathcal A_j$ be the one-site operator space on site $j$, and let $\mathcal A_{0,j}\subset\mathcal A_j$ be its traceless subspace.  The strictly local density space of length $\ell$ starting at $j$ is denoted by $\mathcal A_j^{(\ell)}$.  For $\ell=1$, $\mathcal A_j^{(1)}=\mathcal A_{0,j}$.  For $\ell\ge2$, $\mathcal A_j^{(\ell)}$ is spanned by operator strings
\begin{equation}
    \mathbf A_j^{(\ell)}=A_jA_{j+1}\cdots A_{j+\ell-1}.
\end{equation}
Here the endpoint operators satisfy $A_j\in\mathcal A_{0,j}$ and $A_{j+\ell-1}\in\mathcal A_{0,j+\ell-1}$, while the intermediate operators satisfy $A_{j+r}\in\mathcal A_{j+r}$ for $1\le r\le \ell-2$.
The corresponding space of strictly $\ell$-local translationally invariant operators is
\begin{equation}
    \mathcal B^{(\ell)}
    =\operatorname{span}
    \left\{\sum_{j=1}^{N}\mathbf A_j^{(\ell)}
    \;\middle|\; \mathbf A_j^{(\ell)}\in\mathcal A_j^{(\ell)}\right\},
\end{equation}
with $\mathcal B^{(0)}=\mathbb C I$. Thus $\mathcal A_j^{(\ell)}$ denotes a density space with length $\ell$, whereas $\mathcal B^{(\ell)}$ denotes the translationally invariant operator space obtained after summing over $j$. 

The length of a translationally invariant local operator $O$ is
\begin{equation}
    \operatorname{len}(O)
    =\min\left\{k\middle|\;
    O\in\bigoplus_{\ell=0}^{k}\mathcal B^{(\ell)}\right\}.
\end{equation}
A translationally invariant $k$-local operator has the form
\begin{equation}
    Q^{(k)}=\sum_{j=1}^{N}\sum_{\ell=1}^{k}q_{\mathbf A}\mathbf A_j^{(\ell)}, 
\end{equation}
where $q_{\mathbf A}$ denotes the coefficient of the corresponding local basis density.  If only terms with $\ell=k$ are present, the operator will be called strictly $k$-local.

The nearest-neighbor Hamiltonians to which Hokkyo's criterion directly applies are written as
\begin{equation}
\label{H_general_form}
    H=H^{(2)}+H^{(1)},\qquad
    H^{(\ell)}=\sum_{j=1}^{N}h_j^{(\ell)} ,
\end{equation}
where $h_j^{(2)}$ is a two-site interaction density and $h_j^{(1)}$ is an on-site term.  Since such a Hamiltonian is at most two-local, conserved charges of length larger than two are the nontrivial charges relevant to integrability.

We shall frequently use a compact column notation for Pauli-string commutators.  For example,
\begin{equation}
    [X_jY_{j+1}Z_{j+2},Y_{j+2}Z_{j+3}]
    =-2iX_jY_{j+1}X_{j+2}Z_{j+3}
\end{equation}
will be represented as
\begin{equation}
\begin{matrix}
& X_j & Y_{j+1} & Z_{j+2} & \\
&     &         & Y_{j+2} & Z_{j+3} \\ \hline
- & X_j & Y_{j+1} & X_{j+2} & Z_{j+3}.
\end{matrix}
\end{equation}
The common factor $2i$ is suppressed, and only the sign and the Pauli string are displayed.  Unless otherwise stated, the first row denotes the operator being tested and the second row denotes a Hamiltonian density term.

\subsection{Hokkyo's criterion: a sufficient condition for nonintegrability}
\label{Sec:sufficient_nonintegrability}

We now recall the nonintegrability criterion of Hokkyo \cite{hokkyo2025rigorous}.  The criterion applies to Hamiltonians of the form \eqref{H_general_form} and rests on two key assumptions.

The first assumption is injectivity.  It requires that no nonzero traceless one-site operator, acting on either side of the two-site density, commute with the interaction.
\begin{subequations}
\label{eq:injectivity}
\begin{align}
    \forall\,0\neq A_j\in\mathcal A_{0,j},\qquad
    &[A_j,h_j^{(2)}]\neq0,\\
    \forall\,0\neq A_{j+1}\in\mathcal A_{0,j+1},\qquad
    &[A_{j+1},h_j^{(2)}]\neq0.
\end{align}
\end{subequations}
This condition excludes singular interactions such as a pure Ising coupling.  For example, in a spin-$1/2$ nearest-neighbor chain, $XX+YY$ satisfies injectivity whereas $ZZ$ alone does not.

The second assumption is the two-local conservation condition.  It restricts the strictly two-local operators whose commutator with the Hamiltonian does not generate length-three terms.  We first define
\begin{equation}
\label{eq:B_le^k}
    \mathcal B_{\le}^{(k)}=\left\{O\in\mathcal B^{(k)}\;\middle|\;
    \operatorname{len}([O,H])\le k\right\}.
\end{equation}
Thus $O\in\mathcal B_{\le}^{(k)}$ means that all potentially generated $(k+1)$-local terms from the commutator of a $k$-local operator with the Hamiltonian cancel.  Since $H$ is at most two-local, the commutator $[O,H]$ can generate terms of length at most $k+1$.  The two-local conservation condition is
\begin{equation}
\label{eq:2-local_condition}
    \mathcal B_{\le}^{(2)}=\mathbb C H^{(2)}.
\end{equation}
When the above condition holds, the only possible three-local conserved charge has the three-local part proportional to $[h_j^{(2)},h_{j+1}^{(2)}]$.  Under these two assumptions, Hokkyo's criterion reads as follows \cite{hokkyo2025rigorous}.

\begin{theorem}
\label{theorem:Hokkyo_main_result}
For a nearest-neighbor Hamiltonian of the form \eqref{H_general_form} satisfying injectivity \eqref{eq:injectivity} and the two-local conservation condition \eqref{eq:2-local_condition}, suppose that there is no three-local operator $Q^{(3)}$ such that
\begin{equation}
\label{eq:Hokkyo_nonintegrable_inequality}
    \operatorname{len}([Q^{(3)},H])\le2.
\end{equation}
Then the model has no $k$-local conserved charge for any $3\le k\le N/2$.
\end{theorem}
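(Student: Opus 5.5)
The plan is to argue by contradiction. Suppose there is a $k$-local conserved charge $Q$ with $3\le k\le N/2$ and $\operatorname{len}(Q)=k$. Write $Q=\sum_{\ell\le k}Q_\ell$ with $Q_\ell\in\mathcal B^{(\ell)}$, and expand $[Q,H]=0$ by length. The goal is to show that $Q$ can be lowered to a $3$-local operator $Q^{(3)}$ with $\operatorname{len}([Q^{(3)},H])\le 2$, which contradicts the hypothesis of Theorem~\ref{theorem:Hokkyo_main_result}.

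\textbf{Step 1: the length-$(k+1)$ equation.} Only $[Q_k,H^{(2)}]$ contributes at length $k+1$, so its strictly $(k+1)$-local part must vanish. For each basis density $\mathbf A_j^{(k)}=A_j\cdots A_{j+k-1}$, this part comes from overlapping $h^{(2)}$ at the two ends: $[A_{j+k-1},h^{(2)}_{j+k-1}]$ on the right and $[A_j,h^{(2)}_{j-1}]$ on the left. Injectivity \eqref{eq:injectivity} guarantees that these end commutators are nonzero. Matching the $(k+1)$-site strings, the right extension of one density must cancel the left extension of the shifted density. I would treat this as a linear "transfer" condition on the coefficient tensor of $Q_k$ and conclude that $Q_k$ is generated, through this shift-cancellation, by its two-site end structure. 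Concretely, $Q_k$ must be a "doubling product" $\sum_j [\cdots[[h_j,h_{j+1}],h_{j+2}],\ldots,h_{j+k-2}]$ up to a scalar. The step uses the two-local conservation condition \eqref{eq:2-local_condition} as the base case: any two-local operator whose extensions cancel is proportional to $H^{(2)}$.

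\textbf{Step 2: the length-$k$ equation.} This combines $[Q_k,H^{(1)}]$, the length-$k$ remainder of $[Q_k,H^{(2)}]$, and $[Q_{k-1},H^{(2)}]$ at length $k$. With $Q_k$ fixed by Step 1 (up to a scalar $c$), I would analyse the positions where the middle of the string is forced. The aim is to show that for $k\ge4$ this equation forces $c=0$, so that $\operatorname{len}(Q)<k$. Iterating Steps 1 and 2 then reduces the length down to $3$. At $k=3$ the surviving charge $Q_3+Q_2+Q_1$ satisfies $\operatorname{len}([Q,H])\le 2$ automatically, since $[Q,H]=0$. This is exactly the excluded $Q^{(3)}$, giving the contradiction.

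\textbf{Main obstacle.} I expect the hardest part to be the inductive step showing that the length-$k$ equation kills the top coefficient for $k\ge4$. My first attempt would be to relate the doubling-product structure at length $k$ to that at length $3$. The idea is to contract the inner part of the string, using partial traces against a fixed middle operator, to extract from a length-$k$ solution a $3$-local operator obeying $\operatorname{len}([\,\cdot\,,H])\le2$. If $c\neq0$ this operator is nonzero, contradicting the hypothesis. The requirement $k\le N/2$ ensures that, on the periodic chain, the left and right extensions do not wrap around and interfere. That keeps the length grading and the cancellations above valid.
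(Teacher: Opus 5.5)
There is a genuine gap, and it sits exactly where the theorem's content lies.

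Your Step~1 has the right structure. Under injectivity the leading part of a candidate charge lies in $\mathcal B_{\le}^{(k)}=\iota_{k-1}(\mathcal B_{\le}^{(k-1)})$, so with $\mathcal B_{\le}^{(2)}=\mathbb C H^{(2)}$ it equals $c\,\iota_{k-1}\circ\cdots\circ\iota_2(H^{(2)})$. You assert rather than prove that every solution of the length-$(k+1)$ cancellation lies in the image of $\iota_{k-1}$, but that is the standard lemma.

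The gap is Step~2, which you leave as an ``obstacle.'' The mechanism you suggest is to contract interior sites by partial traces and land directly on a three-local operator with $\operatorname{len}([\,\cdot\,,H])\le2$. There is no reason this works. A partial trace over interior sites does not intertwine $[\,\cdot\,,H]$ on $k$-local operators with $[\,\cdot\,,H]$ on three-local ones, because the Hamiltonian terms touching the contracted sites couple them to the surviving sites. The length-$k$ equation also contains the unknown correction $Q_{k-1}$, which the contraction does not control. So you obtain neither the bound $\operatorname{len}([\,\cdot\,,H])\le2$ nor a nonzero contracted operator when $c\neq0$.

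The missing idea is a one-step descent applied to the right object. Descending from a length-$k$ charge does not in general produce a conserved charge of length $k-1$. At best it produces a $(k-1)$-local operator whose commutator with $H$ has length at most $k-2$. This is why the hypothesis is phrased as $\operatorname{len}([Q^{(3)},H])\le2$ rather than $[Q^{(3)},H]=0$. It is also why the argument summarized in Appendix~\ref{App:Hokkyo_sets} runs through the spaces $\mathcal B_{<}^{(k)}$ of Eq.~\eqref{B_lt_k} rather than through conserved charges.

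Your ``iterate down to $k=3$'' therefore targets the wrong statement. Once $c=0$ you already contradict $\operatorname{len}(Q)=k$, so the induction has to live inside the proof that $c=0$. What must be shown is the following, for $k\ge4$. Suppose $\iota_{k-1}(A)+\widetilde{\mathbf A}^{(k-1)}$, with $A\in\mathcal B_{\le}^{(k-1)}$, has commutator with $H$ of length at most $k-1$. Then there is a correction $\widetilde{\mathbf A}^{(k-2)}$ with $\operatorname{len}([A+\widetilde{\mathbf A}^{(k-2)},H])\le k-2$. In other words, on leading parts $\mathcal B_{<}^{(k)}\subseteq\iota_{k-1}(\mathcal B_{<}^{(k-1)})$.

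Proving this is where injectivity strips the extra end site from the length-$k$ equation, including showing that $\widetilde{\mathbf A}^{(k-1)}$ decomposes compatibly. Since each $\iota$ is itself injective under the injectivity assumption, $c\neq0$ then propagates down to a nonzero multiple of $\iota_2(H^{(2)})$ that admits a completion at $k=3$, which contradicts the hypothesis. Without this descent lemma, Step~2 is a stated goal rather than an argument.
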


The theorem provides a useful simplification for proving the absence of local conserved charges.  Instead of excluding conserved quantities separately at every range, as in the general method of Shiraishi \cite{shiraishi2019proof}, one only has to solve a finite three-local obstruction problem after the two structural assumptions have been verified. For more technical details about Hokkyo's criterion, see Appendix~\ref{App:Hokkyo_sets}.

%\paragraph{A practical pre-screening for the two-local conservation condition.}
In practice it is useful to first identify which two-local Pauli strings could possibly belong to $\mathcal B_{\le}^{(2)}$.  For the injective Pauli-string Hamiltonians considered below, let $\mathcal B_l$ and $\mathcal B_r$ be the sets of Pauli strings appearing, respectively, on the left and right sites of $h_j^{(2)}$.  If a two-site string has a left factor outside $\mathcal B_l$ or a right factor outside $\mathcal B_r$, it cannot cancel the leading length-three terms generated by commuting with $H$.  Hence
\begin{equation}
    \mathcal B_{\le}^{(2)}\subset \mathcal B_{lr}^{(2)},
\end{equation}
where $\mathcal B_{lr}^{(2)}$ is the set of strings with left factor in $\mathcal B_l$ and right factor in $\mathcal B_r$.

For candidate two-local conserved charges in $\mathcal B_{\le}^{(2)}$, the commutators that generate length-three terms must cancel, giving the condition
\begin{equation}\label{exclude_2_local}
    \sum_{O\in\mathcal B_{\le}^{(2)}}
    \left([O_j,h_{j+1}^{(2)}]+[O_{j+1},h_j^{(2)}]\right)=0.
\end{equation}
Here $O_j$ denotes the local density operator acting on sites $(j,j+1)$.  We shall use this elementary but efficient test below.
% If $A_jA_{j+1}$ and $A'_{j+1}A'_{j+2}$ are both to contribute to $\mathcal B_{\le}^{(2)}$, their leading commutators must be able to cancel, which requires relations of the form
% \begin{equation}
% \label{exclude_2_local}
%     [A_jA_{j+1},h_{j+1}h_{j+2}]
%     +[A'_{j+1}A'_{j+2},h'_jh'_{j+1}]=0 .
% \end{equation}

\subsection{Reshetikhin condition and a criterion for integrability}
\label{Sec:Reshetikhin_condition}

For nearest-neighbor Hamiltonians of the form \eqref{H_general_form}, Yang-Baxter integrability implies the existence of a three-local conserved charge of Reshetikhin form \cite{KulishSklyanin1982QSTM}.  In the notation used above, the relevant conserved charge ansatz is
\begin{equation}
\label{Reshetikhin}
    Q^{(3)}=
    \sum_j\left([h_j^{(2)},h_{j+1}^{(2)}]+\mathbf A_j^{(2)}\right),
\end{equation}
where $\mathbf A_j^{(2)}$ is a two-local correction term.  This condition has long served as a practical integrability test, and is closely related to the conjecture of Grabowski and Mathieu that a nontrivial three-local conserved charge signals integrability in translation-invariant nearest-neighbor chains \cite{grabowski1995integrability}.

The recent result of Hokkyo \cite{hokkyo2025integrability} gives a rigorous implication in precisely the direction needed here: if the boost-generated three-local charge is conserved, then it generates an infinite family of mutually commuting local charges.  Formally we have the following theorem. 

\begin{theorem}
\label{theorem:Reshetikhin_rigorous}
For a nearest-neighbor Hamiltonian of the form \eqref{H_general_form}, suppose that the Reshetikhin charge \eqref{Reshetikhin} is conserved.  Then the corresponding construction produces an infinite hierarchy of commuting local conserved charges.
\end{theorem}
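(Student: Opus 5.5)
The plan is to realize the Reshetikhin charge as the first step of a boost-operator hierarchy and then show, by induction on the range, that every higher charge produced this way is a genuine local operator commuting with $H$ and with all earlier charges. Set $Q_2=H$ and introduce the formal boost $B=\sum_j j\,h_j^{(2)}$. The commutator $[B,H]$ is well defined despite the position weights: its density is $\sum_j [h_j^{(2)},h_{j+1}^{(2)}]$ plus two-local pieces. So, up to a two-local correction, $Q_3:=[B,Q_2]$ coincides with the Reshetikhin charge~\eqref{Reshetikhin}. Write $\mathbf A_j^{(2)}$ for the correction that makes $Q_3$ conserved by hypothesis. I then define recursively $Q_{k+1}=[B,Q_k]+(\text{lower-range corrections})$, where the corrections are shorter-range translation-invariant operators chosen to restore conservation.

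First, locality. If $Q_k$ has density of range $k$, then $[B,Q_k]=\sum_j j\,[h_j^{(2)},Q_k]$. Rewriting $Q_k=\sum_m q_m$ and using $[H,Q_k]=0$, I would show that the explicit $j$-dependence drops out. Concretely, $\sum_j j[h_j,q_m]=\sum_j (j-m)[h_j,q_m]+m[H,q_m]$ summed over $m$. The last term gives $\sum_m m[H,q_m]$. Conservation says its density is a total difference $\sum_m([H,q_m])=0$, so it can be rewritten as a translation-invariant operator by a telescoping (summation-by-parts) argument on the density of $[H,Q_k]$. This step shows $Q_{k+1}$ is translation invariant with range $k+1$.

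Second, conservation. The core identity is the Jacobi relation $[H,[B,Q_k]]=[[H,B],Q_k]+[B,[H,Q_k]]=-[Q_3',Q_k]$, where $Q_3'=[B,H]$ differs from $Q_3$ by a two-local term. So I need to show inductively that $[Q_3,Q_k]$ vanishes modulo something correctable by lower-range terms. I would do this by induction on $k$, building a commuting family $\{Q_2,\dots,Q_k\}$ and using at each step the defect $[H,[B,Q_k]]$. It has reduced range and must lie in the image of $\mathrm{ad}_H$ on shorter operators. This is where the hypothesis that $Q_3$ is exactly conserved (with its specific two-local correction) enters: it fixes the ambiguity at the base of the induction.

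The main obstacle is the second step: showing that the defect $[H,[B,Q_k]]$ can always be cancelled by adding a lower-range operator. Equivalently, some cohomological obstruction for $\mathrm{ad}_H$ on local operators must vanish. My first attempt would be a leading-range analysis in the spirit of Shiraishi/Hokkyo. I would compare the top-range parts of $[Q_3,Q_k]$ and $[H,Q_{k+1}]$, use the conservation of $Q_3$ to show the top-range parts cancel, and then descend range by range, absorbing each remainder into a correction. Mutual commutativity would follow afterwards, since $[Q_k,Q_l]$ is conserved and shorter than the charges generated, and a range-comparison argument forces it to vanish.
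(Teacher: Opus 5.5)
\textbf{Verdict: there is a genuine gap.} The step you call ``the main obstacle'' is the entire content of the theorem, and your proposal does not prove it; your commutativity argument also fails.

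The paper does not prove this statement itself. It imports it from Hokkyo's theorem that a conserved boost-generated (Reshetikhin) three-local charge implies an infinite tower. Your proposal therefore has to stand on its own as a proof of that result.

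Your skeleton is the natural one. The locality step is fine: summation by parts works, using that a local density whose translation sum vanishes is a discrete divergence. You do need to work on the infinite chain, since $B$ is not defined with periodic boundary conditions.

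The failing step is the conservation step. Write $[B,H]=\pm(Q_3-A)$, where $A$ is two-local (the Reshetikhin correction $\sum_j\mathbf A_j^{(2)}$, plus on-site contributions if $H^{(1)}\neq0$). Then Jacobi gives $[H,[B,Q_k]]=\pm([Q_3,Q_k]-[A,Q_k])$. What you need is that this defect is $\mathrm{ad}_H$ of an operator of range at most $k$. It is always $\mathrm{ad}_H$ of $[B,Q_k]$, but that choice just gives $Q_{k+1}=0$. Even granting $[Q_3,Q_k]=0$ inductively, which is itself part of the claim, you would still need $[A,Q_k]$ to have a primitive of range at most $k$. Nothing in the proposal supplies one.

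Your leading-range plan does not close this gap. The reduction of the defect from range $k+2$ to range $k+1$ is automatic and uses no hypothesis. If $q$ is the leading density of $Q_k$, conservation gives $[q_j,h_{j+k-1}]=[h_j,q_{j+1}]$. The Jacobi identity together with $[h_j,h_{j+k}]=0$ then kills the range-$(k+2)$ terms. For $k=2$ this is just $[[h_1,h_2],h_3]+[[h_2,h_3],h_1]=0$.

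Everything therefore rests on the range-$(k+1)$ and lower remainders, at every level $k$. ``Absorbing each remainder into a correction'' simply assumes these remainders are exact with short primitives. Yet this is exactly where obstructions arise in Shiraishi/Hokkyo-type analyses; it is how nonintegrability is proved. A single level-3 identity used only at the base of the induction gives no visible control over the level-$k$ equations. The missing idea is a mechanism that propagates the level-3 identity to all levels. In Yang--Baxter models the transfer matrix provides it; in the present generality, that propagation is precisely the content of the cited theorem.

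Your commutativity argument fails as stated. $[Q_k,Q_l]$ is conserved, but its range can reach $k+l-1$, which is longer than both factors. So no range comparison forces it to vanish. Even if you knew the $Q_m$ exhaust the conserved charges, you would only get $[Q_k,Q_l]=\sum_{m\le k+l-1}c_mQ_m$. Showing $c_m=0$ needs a separate argument.
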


Together with Theorem~\ref{theorem:Hokkyo_main_result}, this result gives a practical dichotomy for the models studied below.  If the Reshetikhin equation can be solved, integrability follows.  If injectivity and the two-local conservation condition hold but no admissible three-local charge exists, nonintegrability follows.  Related dichotomy results have recently been obtained for isotropic spin chains \cite{Shiraishi2025DichotomyTS}.  This dichotomy should not, however, be understood as universal; recent work on non-Hermitian bosonic chains shows that such all-or-nothing alternatives can fail \cite{Yamaguchi2026ViolatingAllOrNothing}.

\subsection{Frustration-graph method as an integrability diagnostic}
\label{Sec:frustration_graph}

The frustration-graph method was originally developed to characterize free-fermion solvability in Pauli-string Hamiltonians \cite{chapman2020characterization}.  In the free-fermions-in-disguise framework, it was further shown to provide a sufficient graphical condition for integrability \cite{elman2021free_behind}.  We use it here only as a supplementary diagnostic, especially for identifying free-fermion-type integrable structures, not as the main proof of nonintegrability.

The construction is simple.  Each Pauli-string term in the Hamiltonian is represented by a vertex, and two vertices are connected by an edge if the corresponding Pauli strings anticommute.  An induced subgraph is obtained by selecting a subset of vertices together with all edges among them.  The two forbidden induced subgraphs relevant here are claws and even holes, illustrated in Fig.~\ref{fig:forbidden_subgraphs}.  A claw is a $K_{1,3}$ graph, namely a central vertex connected to three mutually disconnected leaves.  An even hole is an induced cycle of even length with no chord.

\begin{figure}[t]
\centering
\begin{tikzpicture}
    \coordinate (A) at (0,0);
    \coordinate (B) at (2,0);
    \coordinate (C) at (1,1.732);
    \coordinate (O) at (1,0.577);
    \draw[thick] (O) -- (A);
    \draw[thick] (O) -- (B);
    \draw[thick] (O) -- (C);
    \draw[fill] (A) circle [radius=0.05];
    \draw[fill] (B) circle [radius=0.05];
    \draw[fill] (C) circle [radius=0.05];
    \draw[fill] (O) circle [radius=0.05];
\end{tikzpicture}
\qquad\qquad
\begin{tikzpicture}
    \coordinate (A) at (0,0);
    \coordinate (B) at (1.5,0);
    \coordinate (C) at (1.5,1.5);
    \coordinate (D) at (0,1.5);
    \draw[thick] (A) -- (B) -- (C) -- (D) -- cycle;
    \draw[fill] (A) circle [radius=0.05];
    \draw[fill] (B) circle [radius=0.05];
    \draw[fill] (C) circle [radius=0.05];
    \draw[fill] (D) circle [radius=0.05];
\end{tikzpicture}
\qquad\qquad
\begin{tikzpicture}
    \foreach \i in {0,...,5} {
        \coordinate (p\i) at (60*\i:1);
    }
    \draw[thick] (p0) -- (p1) -- (p2) -- (p3) -- (p4) -- (p5) -- cycle;
    \foreach \i in {0,...,5} {
        \draw[fill] (p\i) circle [radius=0.05];
    }
\end{tikzpicture}
\caption{Forbidden induced subgraphs in the frustration-graph approach: a claw (left) and even holes with four and six vertices (middle and right).}
\label{fig:forbidden_subgraphs}
\end{figure}

\begin{theorem}
\label{theorem:frustration_graph}
If the frustration graph of a Pauli-string Hamiltonian is claw-free, then the Hamiltonian is integrable.  If its frustration graph is both claw-free and even-hole-free, then the Hamiltonian admits a free-fermion solution.
\end{theorem}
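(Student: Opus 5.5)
The claw-free part rests on the Fendley–Elman et al. construction of commuting charges from independent sets, and the even-hole-free part on the Chapman–Flammia/Fendley free-fermion construction. Write $H=\sum_{v\in V} b_v h_v$, with each $h_v$ a Pauli string (so $h_v^2=I$). Two strings either commute or anticommute, and the frustration graph $G=(V,E)$ records anticommutation.

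For the first claim, I would define for each $k\ge0$ the charge
\begin{equation}
Q^{(k)}=\sum_{S\in\mathcal I_k(G)}\prod_{v\in S} b_v h_v ,
\end{equation}
where $\mathcal I_k(G)$ is the set of independent sets of $G$ of size $k$. The product is well defined because the strings in an independent set mutually commute. I then need $[H,Q^{(k)}]=0$. Write $[H,Q^{(k)}]=\sum_{u}\sum_{S} b_u\,[h_u,h_S]$. A term is nonzero only if $u$ anticommutes with an odd number of elements of $S$. Claw-freeness means $u$ has at most two neighbours in any independent set, so only the case of exactly one neighbour $w\in S$ survives. These terms are paired by the involution $(u,S)\mapsto(w,S\setminus\{w\}\cup\{u\})$, and the two commutators in a pair cancel. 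Checking this cancellation, including signs and the sets where $u\notin S$ but $u$ is adjacent to $S$, is the main obstacle. I expect the clean argument to be Fendley's: first prove the stronger statement $[Q^{(k)},Q^{(l)}]=0$ for all $k,l$ by encoding the charges in the generating function $T(u)=\sum_k(-u)^kQ^{(k)}$. Since $Q^{(1)}=H$, this gives commutation with $H$.

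Locality and independence come next. For a translationally invariant local Hamiltonian, $Q^{(k)}$ is not local, but $\log T(u)$ expanded in $u$ is, by the standard cluster (connected-diagram) argument on independent sets. Its coefficients give infinitely many linearly independent local charges, which is integrability in the sense of Sec.~\ref{Sec:Hokkyo_model}.

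For the second claim, I would use the theorem of Chudnovsky–Seymour that a claw-free graph with a simplicial clique has an independence polynomial with only real roots. Even-hole-freeness (together with claw-freeness) guarantees such a simplicial clique. Following Fendley and Chapman–Flammia–Kollár, I would then build from $T(u)$ and the simplicial-clique edge operator $\chi$ the raising and lowering operators $\Psi_{\pm k}\propto T(\mp u_k)\chi T(\pm u_k)$. These satisfy canonical anticommutation relations and give $H=\sum_k\epsilon_k[\Psi_k,\Psi_{-k}]$ with $\epsilon_k=u_k^{-1/2}$, where $u_k$ are the roots of the independence polynomial. That is a free-fermion solution.

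The steps where I expect difficulty are the sign bookkeeping in the involution and the use of the simplicial clique. Both are established in \cite{elman2021free_behind,chapman2020characterization}, so in the paper I would cite those results for them rather than redo them.
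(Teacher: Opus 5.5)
\textbf{Gap: the step from commuting charges to local charges.} The paper gives no proof of this theorem. It imports it from Refs.~\cite{elman2021free_behind,chapman2020characterization}, so your plan to cite is the same route. The problem is the step you add to reach the paper's definition of integrability in Sec.~\ref{Sec:Hokkyo_model}, which demands infinitely many linearly independent \emph{nontrivial} local charges. The charges $Q^{(k)}$ of the cited construction are nonlocal, so the citation does not cover this step.

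Locality of the coefficients of $\ln T(u)$ is fine. With vertex-dependent fugacities, $T_{A\cup B}=T_AT_B$ and $[T_A,T_B]=0$ whenever no vertex of $A$ is adjacent to a vertex of $B$. Hence only monomials with connected support in $G$ survive. Each coefficient is a polynomial in the $Q^{(k)}$, so it is conserved.

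However, claw-freeness gives no control over nontriviality or independence, and in general both fail. For an edgeless frustration graph, e.g.\ $H=\sum_jZ_j$, one has $\ln T(u)=\sum_j\bigl[\tfrac12\ln(1-u^2)-\operatorname{artanh}(u)\,Z_j\bigr]$. Every coefficient is then a multiple of $I$ or of $H$; that model is integrable for other reasons, but not through your construction. Moreover, for any $G$ the $u^2$ coefficient is the constant $-\tfrac12\sum_vb_v^2$. So \emph{its coefficients give infinitely many linearly independent local charges} is unproved and false as stated. You need either model-specific input, or you should state the first claim only in the cited sense of a commuting family $\{Q^{(k)}\}$. Examples of model-specific input: Fendley's explicit analysis of the FFD chain, or, in the free-fermion case, a check that the odd coefficients (weighted by $\epsilon_k^{2m+1}$) span an infinite-dimensional space. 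The latter check fails for a flat single-particle spectrum.

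\textbf{Smaller points.} The cancellation you call the main obstacle is immediate. If $u\notin S$ has the single neighbour $w$ in $S$, then $[h_u,h_S]=2h_uh_w\prod_{v\in S\setminus\{w\}}h_v$. Its partner gives $2h_wh_u\prod_{v\in S\setminus\{w\}}h_v$ with the same coupling product, and $h_uh_w=-h_wh_u$, so the pair cancels.

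Packaging the $Q^{(k)}$ into $T(u)$ proves nothing about $[Q^{(k)},Q^{(l)}]$. That needs a separate argument: claw-freeness makes $G[S_1\triangle S_2]$ a disjoint union of paths and even cycles, and one swaps the two halves of a path component with an odd number of edges. Conservation alone does not need it.

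In the free-fermion part, even-hole-freeness is not used only to obtain a simplicial clique. It is what gives $T(u)T(-u)=P_G(-u^2)\,I$, with $P_G(x)=\sum_S x^{|S|}\prod_{v\in S}b_v^2$ the weighted independence polynomial. Pairs $(S_1,S_2)$ with $S_1\triangle S_2$ a disjoint union of even holes survive the path-swapping cancellation, so without even-hole-freeness this identity fails. The identity is what makes $\Psi_{\pm k}^2=0$ at the roots, so your sketch should name it.
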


Here claw-free means that no induced subgraph is a claw; similarly, even-hole-free means that no induced subgraph is an even hole.  The criterion is sufficient but not necessary \cite{chapman2023unified}.  In particular, some integrable models have frustration graphs containing claws or even holes \cite{fendley2024fbeyond,fukai2025free_claw}.  In the present work the graph analysis should therefore be read as a structural explanation of the charge-based classification, rather than as an independent proof of nonintegrability.

\section{Integrability test on the deformed Fredkin spin chain}
\label{sec:nonintegrable_test_deform_Fredkin_model}

This section applies the above criteria to the deformed Fredkin chain.  We first introduce the periodic spin-$1/2$ deformed Fredkin Hamiltonian and rewrite it as a nearest-neighbor model on composite sites in Sec.~\ref{deformed_Fredkin_model}.  We then prove in Sec.~\ref{integrability_test_deform} that this composite Hamiltonian satisfies the assumptions of Hokkyo's criterion for any nonzero deformation parameter.  It follows that the original deformed Fredkin chain is nonintegrable.  The proof also isolates a small set of three-site Pauli structures that will later serve as building blocks for minimal nonintegrable models.

\subsection{Hamiltonian of the deformed Fredkin spin chain}
\label{deformed_Fredkin_model}

The deformed Fredkin spin chain has the Hamiltonian \cite{salberger2017deformed}
\begin{equation}
    H(s,t)=H_F(s,t)+H_X(s)+H_\partial(s),
\end{equation}
where $s$ is the color index and $t>0$ is the deformation parameter.  The term $H_F(s,t)$ is the bulk deformed Fredkin Hamiltonian, $H_X(s)$ appears in the colored generalizations with $s>1$, and $H_\partial(s)$ is the boundary contribution.  We focus on the uncolored periodic chain, corresponding to $s=1$.  In this case the color-dependent terms are absent, and periodic boundary conditions give
\begin{equation}
\label{HF_s=1/2}
    H(t)=\sum_{j=1}^{N}\left(
    |\phi_{j,+}\rangle\langle\phi_{j,+}|
    +|\phi_{j,-}\rangle\langle\phi_{j,-}|
    \right),
\end{equation}
with
\begin{subequations}
\begin{align}
    |\phi_{j,+}\rangle
    &=\frac{1}{\sqrt{1+t^2}}
    |\uparrow_{j-1}\rangle\otimes
    \left(|\uparrow_j\downarrow_{j+1}\rangle
    -t|\downarrow_j\uparrow_{j+1}\rangle\right),\\
    |\phi_{j,-}\rangle
    &=\frac{1}{\sqrt{1+t^2}}
    \left(|\uparrow_{j-1}\downarrow_j\rangle
    -t|\downarrow_{j-1}\uparrow_j\rangle\right)
    \otimes|\downarrow_{j+1}\rangle.
\end{align}
\end{subequations}
Up to an overall factor $1/(4(1+t^2))$, additive identity terms, and a one-site term that vanishes after the periodic translational sum, the Pauli form of the local density is
\begin{equation}
\label{Fredkin_density}
    h_j(t)\propto h_j^{(2)}(t)+h_j^{(3)}(t),
\end{equation}
where
\begin{subequations}
\begin{multline}
\label{Fredkin_density_2}
    h_j^{(2)}(t)=
    -tX_jX_{j+1}I_{j+2}-tI_jX_{j+1}X_{j+2}\\
    -tY_jY_{j+1}I_{j+2}-tI_jY_{j+1}Y_{j+2}\\
    -t^2Z_jZ_{j+1}I_{j+2}
    -t^2I_jZ_{j+1}Z_{j+2}.
\end{multline}
\begin{multline}
\label{Fredkin_density_3}
    h_j^{(3)}(t)=
    -tZ_jX_{j+1}X_{j+2}-tZ_jY_{j+1}Y_{j+2}\\
    +tX_jX_{j+1}Z_{j+2}
    +tY_jY_{j+1}Z_{j+2}\\
    +(t^2-1)Z_jI_{j+1}Z_{j+2}.
\end{multline}
\end{subequations}
At $t=1$ this reduces to the original Fredkin spin chain, whose nonintegrability was proved in Ref.~\cite{fan2025absence}.  The deformed case is structurally richer because of the additional Ising-type term proportional to $t^2-1$.

The important observation for the present work is that the terms in Eq.~\eqref{Fredkin_density} can be viewed as coupled simple integrable constituents.  The two-site part contains an XXZ-type interaction, while the three-site terms $XXZ$, $YYZ$, $ZXX$, and $ZYY$ are FFD-type Pauli strings \cite{fendley2019free_in}.  The nonintegrability shown below is therefore not caused by a complicated individual term, but by the way these integrable pieces are coupled.

The nonintegrability criterion in Theorem~\ref{theorem:Hokkyo_main_result} is formulated for nearest-neighbor Hamiltonians.  We therefore group two neighboring physical spins into one composite site, $n=(2j-1,2j)$, so that a three-site spin-$1/2$ Hamiltonian becomes a nearest-neighbor Hamiltonian with local Hilbert space of dimension four.  This blocking is a local algebra isomorphism 
\begin{equation}
    \bigotimes_{j=1}^{N}\mathbb C^2
    \simeq
    \bigotimes_{n=1}^{L}(\mathbb C^2\otimes\mathbb C^2),
\end{equation}
therefore preserves commutators and conservation laws.  In the composite-site representation, we use the two-qubit basis $\{I,X,Y,Z\}^{\otimes 2}$. The same blocking idea is standard in treating finite-range interactions as nearest-neighbor interactions with enlarged local space, and is closely related to the medium-range viewpoint used in integrability studies \cite{gombor2021integrable}.

For even $N=2L$, the deformed Fredkin Hamiltonian in the composite-site representation can be written as
\begin{equation}
\label{deformed_equivalent}
    H(t)=\sum_{n=1}^{L}\left(h_n^{(2)}(t)+h_n^{(1)}(t)\right),
\end{equation}
where
\begin{subequations}
\begin{equation}
\label{deformed_equivalent_1}
    h_n^{(1)}(t)=
    -2t(XX)_n-2t(YY)_n-2t^2(ZZ)_n,
\end{equation}
\begin{multline}
\label{deformed_equivalent_2}
    h_n^{(2)}(t)
    =(t^2-1)(IZ)_n(IZ)_{n+1}
    +(t^2-1)(ZI)_n(ZI)_{n+1}
    \\
    -2t(IX)_n(XI)_{n+1}
    -2t(IY)_n(YI)_{n+1}
    \\
    -2t^2(IZ)_n(ZI)_{n+1}
    +t(IX)_n(XZ)_{n+1}
    \\
    +t(IY)_n(YZ)_{n+1}
    +t(XX)_n(ZI)_{n+1}
    \\
    +t(YY)_n(ZI)_{n+1}
    -t(IZ)_n(XX)_{n+1}
    \\
    -t(IZ)_n(YY)_{n+1}
    -t(ZX)_n(XI)_{n+1}
    \\
    -t(ZY)_n(YI)_{n+1}.
\end{multline}
\end{subequations}
Here $h^{(2)}_n$ denotes the genuine nearest-neighbor part in the composite chain, while $h^{(1)}_n$ denotes the on-site term in the form required by Eq.~\eqref{H_general_form}. 

\subsection{Nonintegrability of the deformed Fredkin spin chain}
\label{integrability_test_deform}

We first verify the assumptions required for Theorem~\ref{theorem:Hokkyo_main_result}.  A direct finite-dimensional check of the left and right Pauli operators appearing in Eq.~\eqref{deformed_equivalent_2} shows that the interaction is injective for $t\neq0$.  More explicitly, write a single-composite-site operator as a linear combination of the fifteen nonidentity two-qubit Pauli strings.  Imposing that this operator commute with $h_n^{(2)}(t)$ from the left or from the right gives, for $t\neq0$, homogeneous linear systems whose only solution is the zero operator.  Thus no nonzero traceless single-composite-site operator commutes with the interaction $h_n^{(2)}(t)$ on either side.

It remains to prove the two-local conservation condition in Eq.~\eqref{eq:2-local_condition}.  We state this as a lemma.

\begin{lemma}
\label{theorem:fredkin_B2_condition}
For the deformed Fredkin spin chain in the composite-site representation \eqref{deformed_equivalent}, with $t\neq0$, the two-local conservation condition $\mathcal B_{\le}^{(2)}=\mathbb C H^{(2)}$ holds.
\end{lemma}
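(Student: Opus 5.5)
We will prove the lemma by a direct linear-algebra analysis of the leading (length-three) terms generated by commuting a candidate strictly two-local operator with $H(t)$, following the pre-screening strategy of Sec.~\ref{Sec:sufficient_nonintegrability}.

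\emph{Step 1: restrict the candidate space.} We read off from Eq.~\eqref{deformed_equivalent_2} the left factor set $\mathcal B_l=\{IZ,ZI,IX,IY,XX,YY,ZX,ZY\}$ and the right factor set $\mathcal B_r=\{IZ,ZI,XI,YI,XZ,YZ,XX,YY\}$. By the argument preceding Eq.~\eqref{exclude_2_local}, any $O=\sum_n O_n\in\mathcal B_{\le}^{(2)}$ has density $O_n=\sum_{a\in\mathcal B_l,\,b\in\mathcal B_r} c_{ab}\,a_n b_{n+1}$. This leaves at most $64$ unknown coefficients. Injectivity is essential here: a left factor $a\notin\mathcal B_l$ produces, in $[a_nb_{n+1},h^{(2)}_{n+1}]$ or $[a_nb_{n+1},h^{(2)}_{n-1}]$, a three-site string with outer factor $a$ that no other commutator can cancel. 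The same holds for right factors outside $\mathcal B_r$.

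\emph{Step 2: impose the cancellation condition.} We expand Eq.~\eqref{exclude_2_local}, i.e.\ the coefficient of every three-composite-site Pauli string in
\[
\sum_{a,b}c_{ab}\bigl([a_nb_{n+1},h^{(2)}_{n+1}]+[a_{n+1}b_{n+2},h^{(2)}_n]\bigr)=0 .
\]
Each resulting string $P_nQ_{n+1}R_{n+2}$ receives contributions only from terms whose middle factors overlap nontrivially. The resulting homogeneous system is therefore sparse and splits into small blocks, labeled by the outer factors $(P,R)$. We organize the computation by first fixing the outer pair. For outer factors $(P,R)=(a,b')$ with $a\in\mathcal B_l$ and $b'\in\mathcal B_r$, the equations relate $c_{ab}$ to the Hamiltonian coefficients $h_{b'' b'}$ through middle-factor commutators $[b,b'']$.

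\emph{Step 3: solve the system.} The goal is to show that the solution space is one-dimensional, spanned by $c_{ab}\propto h_{ab}$, the coefficients of $h^{(2)}_n(t)$. The natural route is first to kill all off-support coefficients, i.e.\ pairs $(a,b)$ not appearing in $h^{(2)}$. For each such pair we choose a partner Hamiltonian term whose commutator produces a string reachable from a single source. Good candidates are commutators with the $(IX)(XI)$ and $(IY)(YI)$ terms, whose coefficients $-2t$ are nonzero for $t\neq0$. Next, for the on-support coefficients, pairs of equations of the form $c_{ab}h_{a'b'}=c_{a'b'}h_{ab}$ chain all thirteen coefficients together, fixing their ratios to those of $h^{(2)}$. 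The chaining goes through the XX/YY hopping terms, which connect the $Z$-type, $XZ$-type and $XX$-type structures.

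\emph{Main obstacle.} The delicate point is the Ising-type coefficient $t^2-1$, which vanishes at $t=1$. The argument must not rely on the $(IZ)(IZ)$ or $(ZI)(ZI)$ terms to link coefficients, so that the proof covers all $t\neq0$ uniformly. We would verify that every ratio is fixed using only terms with coefficients $\pm t$ and $-2t^2$. Separately, we would check that $c_{IZ,IZ}$ and $c_{ZI,ZI}$ are forced to equal $(t^2-1)\lambda$, including the value zero at $t=1$, via equations that involve their commutators with the hopping terms. This case analysis is finite and mechanical, and could be cross-checked by computing the rank of the $64$-variable linear system symbolically in $t$. The analysis would confirm that the kernel has dimension one exactly when $t\neq0$.
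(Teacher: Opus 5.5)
Your proposal is correct and follows essentially the same route as the paper. You restrict to the $64$-dimensional span $\mathcal B_{lr}^{(2)}$ via injectivity, eliminate the non-Hamiltonian strings through uncancellable length-three outputs, and then fix the thirteen surviving coefficients by cancellation, with the $(IZ)(IZ)$ and $(ZI)(ZI)$ coefficients forced to $(t^2-1)\lambda$ through hopping-term equations, which is indeed how the point $t=1$ is covered. One refinement, which the paper's own two-term derivation of $q_{IZZI}=-2tq_{XXZI}$ also glosses over: the $(IZ)(ZI)$ coefficient enters only the strings $(IZ)(YX)(ZI)$ and $(IZ)(XY)(ZI)$, each of which also receives contributions from $(IZ)(IZ)$, $(ZI)(ZI)$, $(XX)(ZI)$, $(YY)(ZI)$, $(IZ)(XX)$ and $(IZ)(YY)$, so it is fixed by a multi-term equation once those are determined, not by a pairwise relation $c_{ab}h_{a'b'}=c_{a'b'}h_{ab}$.
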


\begin{proof}
For the two-site density \eqref{deformed_equivalent_2}, the left and right Pauli operators are
\begin{subequations}
\begin{align}
   \mathcal B_l&=\{IX,IY,IZ,XX,YY,ZI,ZX,ZY\},\\
   \mathcal B_r&=\{IZ,XI,XX,XZ,YI,YY,YZ,ZI\}.
\end{align}
\end{subequations}
Hence any possible element of $\mathcal B_{\le}^{(2)}$ must lie in the $64$-dimensional span $\mathcal B_{lr}^{(2)}$.  We can first eliminate all non-Hamiltonian operators by using the commutator cancellation relation \eqref{exclude_2_local}.  As an illustration, consider $(YY)(XX)$ and commute it with the Hamiltonian term $(IY)(YI)$:
\begin{equation}
\begin{matrix}
  & (Y & Y) & (X & X) &  & \\
  &  &  & (I & Y) & (Y & I) \\ \hline
  & (Y & Y) & (X & Z) & (Y & I).
\end{matrix}
\end{equation}
The resulting length-three string $(YY)(XZ)(YI)$ is not produced by the commutator of any other operator of $\mathcal B_{lr}^{(2)}$ with the Hamiltonian.  Therefore $(YY)(XX)$ cannot belong to $\mathcal B_{\le}^{(2)}$.  Applying the same test to the remaining non-Hamiltonian operators leaves only the thirteen two-site terms already present in $H^{(2)}$.

We then determine their allowed coefficients.  Let the coefficients of $(IX)(XZ)$ and $(XX)(ZI)$ be $q_{IXXZ}$ and $q_{XXZI}$.  Their commutators with the Hamiltonian give
\begin{equation}
\begin{gathered}
\begin{matrix}
q_{IXXZ}&(I&X)&(X&Z)&&\\
t&&&(X&X)&(Z&I)\\ \hline
q_{IXXZ}t&(I&X)&(I&Y)&(Z&I),
\end{matrix}
\\[1ex]
\begin{matrix}
q_{XXZI}&&&(X&X)&(Z&I)\\
t&(I&X)&(X&Z)&&\\ \hline
-q_{XXZI}t&(I&X)&(I&Y)&(Z&I).
\end{matrix}
\end{gathered}
\end{equation}
Cancellation requires $q_{IXXZ}=q_{XXZI}$.  Similarly, comparing $(IZ)(ZI)$ and $(XX)(ZI)$ gives
\begin{equation}
\begin{gathered}
\begin{matrix}
q_{IZZI}&(I&Z)&(Z&I)&&\\
t&&&(X&X)&(Z&I)\\ \hline
q_{IZZI}t&(I&Z)&(Y&X)&(Z&I),
\end{matrix}
\\[1ex]
\begin{matrix}
q_{XXZI}&&&(X&X)&(Z&I)\\
-2t^2&(I&Z)&(Z&I)&&\\ \hline
2q_{XXZI}t^2&(I&Z)&(Y&X)&(Z&I),
\end{matrix}
\end{gathered}
\end{equation}
which implies $q_{IZZI}=-2tq_{XXZI}$.  Continuing this finite elimination fixes all coefficients relative to the coefficient $c$ of $(IX)(XZ)$ and gives
\begin{multline}
\label{B=CH}
    g^{(2)}=
    \frac{t^2-1}{t}c(IZ)(IZ)
    +\frac{t^2-1}{t}c(ZI)(ZI)
    \\
    -2c(IX)(XI)-2c(IY)(YI)
    \\
    -2tc(IZ)(ZI)+c(IX)(XZ)
    \\
    +c(IY)(YZ)+c(XX)(ZI)
    \\
    +c(YY)(ZI)-c(IZ)(XX)
    \\
    -c(IZ)(YY)-c(ZX)(XI)
    \\
    -c(ZY)(YI).
\end{multline}
Setting $c=t$ gives $g^{(2)}=h^{(2)}$.  After summing over translations, this is precisely the condition $\mathcal B_{\le}^{(2)}=\mathbb C H^{(2)}$.
\end{proof}

We can now apply Hokkyo's criterion.

\begin{theorem}
\label{theorem:nonintegrable_deformed_Fredkin}
For $t\neq0$, the deformed Fredkin Hamiltonian has no conserved charges with composite range $3\le r\le L/2$.  Consequently, it is nonintegrable in the sense of admitting no infinite tower of nontrivial local conserved charges.
\end{theorem}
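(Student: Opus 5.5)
The proof applies Theorem~\ref{theorem:Hokkyo_main_result} to the composite-site Hamiltonian \eqref{deformed_equivalent}. Injectivity for $t\neq0$ was checked above. The two-local conservation condition is Lemma~\ref{theorem:fredkin_B2_condition}. So only the three-local obstruction remains: we must show that no three-local $Q^{(3)}$ satisfies $\operatorname{len}([Q^{(3)},H])\le2$.

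By the structure behind Theorem~\ref{theorem:Hokkyo_main_result} (Appendix~\ref{App:Hokkyo_sets}), the two-local condition forces the strictly three-local part of any such $Q^{(3)}$ to be proportional to $\sum_n[h_n^{(2)},h_{n+1}^{(2)}]$. I will therefore write the ansatz in Reshetikhin form,
\begin{equation*}
Q^{(3)}=\sum_n\left(\lambda\,[h_n^{(2)}(t),h_{n+1}^{(2)}(t)]+\mathbf A_n^{(2)}+\mathbf A_n^{(1)}\right),
\end{equation*}
with unknown two-local and one-local corrections. First I compute $[h_n^{(2)},h_{n+1}^{(2)}]$ explicitly in the two-qubit Pauli basis. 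Then I impose that the coefficients of the length-four and length-three strings in $[Q^{(3)},H]$ vanish. The length-four terms cancel automatically for the Reshetikhin commutator, since the two-local condition guarantees this. The real content lies in the length-three sector. It receives contributions from three sources: $\lambda[[h_n,h_{n+1}],h^{(1)}]$ together with the length-three pieces of $\lambda[[h_n,h_{n+1}],h^{(2)}]$; the commutator $[\mathbf A^{(2)},h^{(2)}]$; and $[\mathbf A^{(1)},h^{(2)}]$.

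The aim is to find a single length-three Pauli string, or a small closed set of them, whose total coefficient is $\lambda\,f(t)$ with $f(t)\neq0$ for all $t\neq0$. This string should be either unreachable by the correction terms, or reachable only through corrections whose other products are themselves forced to vanish. Lemma~\ref{theorem:fredkin_B2_condition} helps here by restricting the correction: any component of $\mathbf A^{(2)}$ outside $\mathcal B_{lr}^{(2)}$ produces uncancellable length-three strings. Hence $\mathbf A^{(2)}$ is effectively supported on $\mathcal B_{lr}^{(2)}$, and it can be shifted by multiples of $h^{(2)}$ without loss of generality. Natural candidates for the obstruction are strings built from the $ZZ$-type on-site field $-2t^2(ZZ)_n$ acting on the FFD-type terms, because the XY and Z couplings carry different powers of $t$. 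This mismatch is the deformed analogue of the mechanism that worked at $t=1$ in Ref.~\cite{fan2025absence}.

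Once $\lambda=0$ is forced, we have $Q^{(3)}$ of length at most two. The statement that no three-local operator satisfies the condition should be read as excluding a genuinely three-local $Q^{(3)}$, so the criterion's hypothesis holds. Theorem~\ref{theorem:Hokkyo_main_result} then gives the absence of charges of composite range $3\le r\le L/2$. Finally, a conserved spin-chain charge of physical range $k$ has composite range about $k/2$. So only the finitely many short charges, of composite range at most two, can survive. Lemma~\ref{theorem:fredkin_B2_condition} shows these are trivial (essentially $H$ and on-site pieces), and no infinite tower exists.

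The main obstacle is the length-three bookkeeping: the basis is 15-dimensional per site, and $h^{(2)}$ has thirteen terms. The first step will be to enumerate the products of $[h_n,h_{n+1}]$ with $h^{(1)}$ that land on strings outside the image of $\mathcal B_{lr}^{(2)}$ under commutation. If that fails, I will select a closed subsystem of a few strings, for example those ending in $(ZI)$ on the right site, and solve the small linear system. A determinant proportional to a power of $t$ times $(t^2+1)$ would then force $\lambda=0$.
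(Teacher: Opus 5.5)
Your reduction is the same as the paper's. Injectivity and Lemma~\ref{theorem:fredkin_B2_condition} fix the three-local part to $\lambda\sum_n\iota_2(h^{(2)})_n$, and the whole proof then rests on exhibiting a length-three string in $[\iota_2(h^{(2)}),H]$ that no two-local correction can produce. Your proposal stops exactly at that step, and the place you propose to look cannot supply such a string. Every string created by the on-site field $F\in\{XX,YY,ZZ\}$ acting on an $\iota_2$ string $(P_\alpha)(Q_\alpha P_\beta)(Q_\beta)$ is also produced by a single correction component. On the left end, $(P_\alpha F)(Q_\alpha P_\beta)(Q_\beta)$ comes from $[\mathbf A^{(2)}_n,h^{(2)}_{n+1}]$ with $\mathbf A^{(2)}_n\ni(P_\alpha F)(Q_\alpha)$, since $Q_\alpha$ already anticommutes with $P_\beta$; the right end and the middle site work the same way. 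These strings therefore only feed a coupled linear system that you have neither set up nor solved, and the determinant $\propto t^k(t^2+1)$ is a guess.

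The missing idea is to use the two-site-overlap commutators $[\iota_2(h^{(2)})_n,h^{(2)}_{n+1}]$. There the last composite site becomes a product of two right factors and can leave $\mathcal B_r$, for instance $XI\cdot YZ\propto XZ\cdot YI\propto ZZ$. The paper takes $(IX)(XI)(ZZ)$. Exactly four chains starting from $t(IX)(XZ)$ feed it, each contributing $-2t^3$, for a total of $-8t^3\neq0$. The field $h^{(1)}$ cannot produce a $ZZ$ end, because that would need $(Q_\beta,F)$ to be $(XX,YY)$ or $(YY,XX)$, which are commuting pairs. No correction reaches this string either. $[\mathbf A^{(2)}_n,h^{(2)}_{n+1}]$ would need $ZZ\in\mathcal B_r$. $[h^{(2)}_n,\mathbf A^{(2)}_{n+1}]$ would need the middle $XI$ to arise as $XI\cdot II$ or as $XZ\cdot IZ$, and $XZ$ and $IZ$ commute.

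There is also a logical error in how you restrict the correction. The pre-screening $\mathcal B^{(2)}_{\le}\subset\mathcal B_{lr}^{(2)}$ concerns two-local operators whose commutator with $H$ has no length-three part at all. It does not constrain $\mathbf A^{(2)}$ in the three-local problem, where its length-three output is meant to cancel that of $\lambda\,\iota_2$. Components outside $\mathcal B_{lr}^{(2)}$ are genuinely needed in general. For $S=\{6,8,10,12\}$, the paper's Reshetikhin correction $2iab(XI)(YZ)-2icd(ZY)(IX)$ has left factors $XI$ and $ZY$ that are not left factors of $h_S^{(2)}$. It is required to cancel strings such as $(XI)(YY)(XZ)$, which arise because $IX\cdot XX=XI$. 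A nonexistence argument carried out inside $\mathcal B_{lr}^{(2)}$ would therefore not establish nonexistence.

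Two smaller points. $[\mathbf A^{(1)},h^{(2)}]$ is two-local and never enters the length-three sector. The length-four cancellation for $\iota_2$ follows from the Jacobi identity with $[h_n,h_{n+2}]=0$, not from the two-local condition.
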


\begin{proof}
By Lemma~\ref{theorem:fredkin_B2_condition}, the only possible three-local part of an admissible conserved charge is $[h_n^{(2)},h_{n+1}^{(2)}]$.  We denote this local commutator by
\begin{equation}
\label{eq:iota_2}
    \iota_2(h^{(2)})_n=[h_n^{(2)},h_{n+1}^{(2)}].
\end{equation}
It is therefore enough to show that no choice of the two-local density $\mathbf A_n^{(2)}$ can make
\begin{equation}
    \operatorname{len}\left(\left[\sum_n\left(\iota_2(h^{(2)})_n+\mathbf A^{(2)}_n\right),H\right]\right)\le2.
\end{equation}
We exhibit a three-local Pauli string whose coefficient cannot be cancelled.

The following four contributions appear in $\iota_2(h^{(2)})_n$:
\begin{subequations}
\begin{equation}
\begin{matrix}
t&(I&X)&(X&Z)&&\\
t&&&(I&Y)&(Y&Z)\\ \hline
-t^2&(I&X)&(X&X)&(Y&Z),
\end{matrix}
\end{equation}
\begin{equation}
\begin{matrix}
t&(I&X)&(X&Z)&&\\
t&&&(I&X)&(X&Z)\\ \hline
t^2&(I&X)&(X&Y)&(X&Z),
\end{matrix}
\end{equation}
\begin{equation}
\begin{matrix}
t&(I&X)&(X&Z)&&\\
-2t&&&(I&X)&(X&I)\\ \hline
-2t^2&(I&X)&(X&Y)&(X&I),
\end{matrix}
\end{equation}
\begin{equation}
\begin{matrix}
t&(I&X)&(X&Z)&&\\
-2t&&&(I&Y)&(Y&I)\\ \hline
2t^2&(I&X)&(X&X)&(Y&I).
\end{matrix}
\end{equation}
\end{subequations}
Commuting these four three-local strings with the corresponding two-site Hamiltonian terms gives the same output string
\begin{subequations}
\begin{equation}
\begin{matrix}
-t^2&(I&X)&(X&X)&(Y&Z)\\
-2t&&&(I&X)&(X&I)\\ \hline
-2t^3&(I&X)&(X&I)&(Z&Z),
\end{matrix}
\end{equation}
\begin{equation}
\begin{matrix}
t^2&(I&X)&(X&Y)&(X&Z)\\
-2t&&&(I&Y)&(Y&I)\\ \hline
-2t^3&(I&X)&(X&I)&(Z&Z),
\end{matrix}
\end{equation}
\begin{equation}
\begin{matrix}
-2t^2&(I&X)&(X&Y)&(X&I)\\
t&&&(I&Y)&(Y&Z)\\ \hline
-2t^3&(I&X)&(X&I)&(Z&Z),
\end{matrix}
\end{equation}
\begin{equation}
\begin{matrix}
2t^2&(I&X)&(X&X)&(Y&I)\\
t&&&(I&X)&(X&Z)\\ \hline
-2t^3&(I&X)&(X&I)&(Z&Z).
\end{matrix}
\end{equation}
\end{subequations}
Thus $[\iota_2(h^{(2)})_n,H]$ contains $-8t^3(IX)(XI)(ZZ)$.  This term cannot be generated by $[\mathbf A^{(2)}_n,h^{(2)}_{n'}]$ with one-site overlap: no right factor of the two-site density can supply the necessary $ZZ$ on the last composite site, and the alternative overlap would require a one-site commutator producing the middle factor $XI$, which is absent.  It is also not generated by the one-site part $h^{(1)}$.  Therefore no choice of $\mathbf A^{(2)}_n$ can cancel this length-three term.  Theorem~\ref{theorem:Hokkyo_main_result} then rules out all composite-local conserved charges of range $3\le r\le L/2$.
\end{proof}

The obstruction above is generated by a small subset of terms from the Fredkin Hamiltonian density, but the cancellation statement is made for the full deformed Fredkin Hamiltonian: the remaining terms neither produce the same Pauli string with the opposite coefficient nor allow any two-local correction to generate a cancelling contribution.  This is where the coupled integrable building blocks become genuinely nonintegrable.  The case $t=1$ is included and recovers the original Fredkin chain, while $t=0$ is excluded because the Hamiltonian degenerates to an Ising-type model and no longer satisfies the injective structure.  Compared with the order-by-order proof used in Ref.~\cite{fan2025absence}, the present proof is much simpler because Hokkyo's criterion reduces the problem to a single three-local obstruction.

\section{Minimal models with three-site interactions}
\label{sec:minimal_models_three_site}

Having proved nonintegrability of the deformed Fredkin chain, we now ask how small a Fredkin-derived three-site spin chain can be while still retaining the injective structure needed for the nonintegrability test.  We first formulate the minimal injectivity condition in Sec.~\ref{subsec:minimal_models_injectivity}.  We then apply this condition to the deformed Fredkin density in Sec.~\ref{subsec:minimal_models_from_deformed_fredkin}, obtaining sixteen minimal injective models. Finally, Sec.~\ref{subsec:minimal_models_qubit_representation} imposes the unblocking constraint and selects the models that give genuine one-site translation-invariant three-site spin-$1/2$ Hamiltonians.

\subsection{Minimal models satisfying injectivity}
\label{subsec:minimal_models_injectivity}

For a nearest-neighbor density $h_j^{(2)}$, injectivity means that no nonzero traceless one-site operator commutes with the interaction from either side, as in Eq.~\eqref{eq:injectivity}.  We first recall the simplest spin-$1/2$ case.

\begin{theorem}[Minimal injective density for spin-$1/2$]
\label{theorem:spin_half_minimal_injectivity}
For a spin-$1/2$ nearest-neighbor Hamiltonian density
\begin{equation}
\label{eq:spin_half_two_term_density}
    h_j^{(2)}=c_1h_{a,j}h_{b,j+1}+c_2h_{c,j}h_{d,j+1},
\end{equation}
with $c_1c_2\neq0$, where $h_a,h_b,h_c,h_d$ are nonzero traceless Hermitian one-site operators, injectivity holds if and only if
\begin{equation}
\label{eq:spin_half_injectivity_condition}
    [h_a,h_c]\neq0,
    \qquad
    [h_b,h_d]\neq0.
\end{equation}
\end{theorem}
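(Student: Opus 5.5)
The plan is to reduce injectivity to the structure of commutants of single-qubit operators. The key fact, used throughout, is the following. A nonzero traceless Hermitian one-site operator has the form $h=\vec n\cdot\vec\sigma$ with $\vec n\in\mathbb R^3\setminus\{0\}$. Its commutant in $\mathcal A_j$ is $\operatorname{span}\{I,h\}$, so its traceless commutant is $\mathbb C h$. A direct way to see this is to write $A=\vec m\cdot\vec\sigma$ with $\vec m\in\mathbb C^3$. Then $[A,h]=2i(\vec m\times\vec n)\cdot\vec\sigma$, which vanishes iff $\vec m\parallel\vec n$. Consequently, for two such operators $[h_a,h_c]=0$ holds iff $h_c\propto h_a$. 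The proof will use only this fact and linear independence of Pauli tensor products.

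\emph{Sufficiency.} Assume \eqref{eq:spin_half_injectivity_condition}. First, $[h_b,h_d]\neq0$ forces $h_b,h_d$ to be linearly independent, since proportional operators commute. Take a traceless $A\in\mathcal A_{0,j}$ acting on the left. Then
\begin{equation}
[A,h_j^{(2)}]=c_1[A,h_a]\otimes h_b+c_2[A,h_c]\otimes h_d .
\end{equation}
Because $h_b,h_d$ are independent, this vanishes iff both $c_1[A,h_a]=0$ and $c_2[A,h_c]=0$. Since $c_1c_2\neq0$, this means $[A,h_a]=0$ and $[A,h_c]=0$. By the commutant fact, $A\propto h_a$ and $A\propto h_c$. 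Since $[h_a,h_c]\neq0$, the operators $h_a,h_c$ are not proportional, which forces $A=0$. The right-hand condition follows by the mirror argument, with the roles of $(h_a,h_c)$ and $(h_b,h_d)$ exchanged.

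\emph{Necessity.} Suppose $[h_a,h_c]=0$. Then $h_c=\mu h_a$ for some real $\mu\neq0$, and
\begin{equation}
h_j^{(2)}=h_{a,j}\otimes\left(c_1h_b+c_2\mu h_d\right)_{j+1}.
\end{equation}
The nonzero traceless operator $A=h_a$ then commutes with $h_j^{(2)}$ from the left, so left injectivity fails. This holds even in the degenerate case where the right factor vanishes and $h_j^{(2)}=0$. Symmetrically, if $[h_b,h_d]=0$, then $h_d\propto h_b$ and $h_b$ violates right injectivity.

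The only step needing care is the sufficiency decomposition. There, the independence of $h_b,h_d$ is what allows the vanishing commutator to be split term by term, and this independence is supplied precisely by the second half of the hypothesis. Everything else is a qubit-specific fact. That fact fails for higher spin, where a traceless operator can have a larger commutant, which is why the statement is restricted to spin-$1/2$.
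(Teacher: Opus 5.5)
Your proof is correct and follows the paper's argument essentially step for step: you split the commutator using the linear independence of $h_b,h_d$, use the cross-product form of the single-qubit commutator to force $A=0$, and for the converse use $h_c\propto h_a$ so that $A=h_a$ witnesses the failure of injectivity. Allowing complex $\vec m$ (non-Hermitian $A$) slightly tightens the paper's real-vector version, and the paper additionally notes that a single product term is never injective, which is what justifies calling the two-term density minimal.
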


\begin{proof}
For a traceless operator $A$ on the left site,
\begin{equation}
    [A_j,h_j^{(2)}]
    =c_1[A,h_a]_jh_{b,j+1}+c_2[A,h_c]_jh_{d,j+1}.
\end{equation}
If $[h_b,h_d]\neq0$, then $h_b$ and $h_d$ are linearly independent, so the vanishing of this commutator implies $[A,h_a]=[A,h_c]=0$.  Writing $A = \vec x\cdot \vec\sigma$, $h_a=\vec a\cdot \vec\sigma$, and $h_c=\vec c\cdot \vec\sigma$, where $\vec\sigma=(X,Y,Z)$ and $\vec x,\vec a,\vec c\in\mathbb R^3$, we have
\begin{equation}
    [A,h_a]=2i(\vec x\times \vec a)\cdot \vec\sigma,
\end{equation}
with an analogous expression for $h_c$.  If $[h_a,h_c]\neq0$, then $\vec a$ and $\vec c$ are not parallel, and the only vector $\vec x$ parallel to both is $\vec x=0$.  Hence left injectivity holds.  The right condition is identical.

Conversely, if $[h_a,h_c]=0$, then the two traceless spin-$1/2$ operators are proportional, and choosing $A=h_a$ gives a nonzero traceless operator commuting with both left operators. Thus left injectivity fails.  The case $[h_b,h_d]=0$ similarly gives failure on the right.  Finally, one product term can never be injective, since its own left and right operators commute with it.
\end{proof}

For the Fredkin spin chain, each composite site contains two qubits.  It is therefore useful to formulate the injectivity condition for a general $m$-qubit composite site.  Let
\begin{equation}
    \mathcal P_m=\{I,X,Y,Z\}^{\otimes m}
\end{equation}
be the Pauli-string basis.  We use the standard binary representation of Pauli strings.  On each qubit, the two binary variables $(x_\ell,z_\ell)$ record whether an $X_\ell$ and a $Z_\ell$ operator are present.  Thus $(0,0)$ represents $I_\ell$, $(1,0)$ represents $X_\ell$, $(0,1)$ represents $Z_\ell$, and $(1,1)$ represents $Y_\ell$ up to a phase convention.  Since injectivity depends only on whether commutators vanish, these phases play no role.  Accordingly, for a Pauli string $P\in\mathcal P_m$ we write
\begin{equation}
    P=X_1^{x_1}\cdots X_m^{x_m}Z_1^{z_1}\cdots Z_m^{z_m},
\end{equation}
where equality is understood up to an irrelevant overall phase.  The Pauli string is then represented by the binary vector
\begin{equation}
    v(P)=(x_1,\ldots,x_m,z_1,\ldots,z_m)\in\mathbb F_2^{2m}.
\end{equation}
This is the binary symplectic notation used in the stabilizer formalism \cite{aaronson2004improved}. Here it is used only to track commutation relations.  With this convention, the commutation relation of two Pauli strings can be expressed as
\begin{equation}
\label{eq:symplectic_product_m_qubit}
    \langle v(P),v(P')\rangle_{\rm sp}
    =\sum_{\ell=1}^{m}(x_\ell z'_\ell+z_\ell x'_\ell)
    \pmod 2.
\end{equation}
Here the subscript ${\rm sp}$ denotes the binary symplectic product.  Thus $P$ and $P'$ commute when $\langle v(P),v(P')\rangle_{\rm sp}=0$ and anticommute when $\langle v(P),v(P')\rangle_{\rm sp}=1$.  

In what follows, linear independence, span, and full rank refer to the binary label space $v(P)\in\mathbb F_2^{2m}$, with coefficients in $\mathbb F_2$.  This binary span should not be confused with the usual complex linear span of Pauli matrices as operators.  Distinct Pauli strings remain linearly independent in the matrix sense, whereas the rank condition below concerns only their binary commutation labels.  This representation is standard in the Pauli-group literature \cite{planat2007pauli}. 

For spin chains with composite sites, the following theorem gives a necessary and sufficient condition for injectivity in terms of these binary symplectic labels.

\begin{theorem}
\label{theorem:m_qubit_minimal_injectivity}
Consider an $m$-qubit composite nearest-neighbor Hamiltonian density 
\begin{equation}
\label{eq:m_qubit_density}
    h_j^{(2)}=\sum_{\alpha=1}^{2m}c_\alpha P_{\alpha,j}Q_{\alpha,j+1},
    \qquad c_\alpha\neq0,
\end{equation}
where $P_\alpha,Q_\alpha\in\mathcal P_m\setminus\{I^{\otimes m}\}$.  The density is injective if and only if
\begin{equation}
\label{eq:m_qubit_rank_condition}
    \operatorname{rank}_{\mathbb F_2}\operatorname{span}\{v(P_\alpha)\}_{\alpha=1}^{2m}
    =
    \operatorname{rank}_{\mathbb F_2}\operatorname{span}\{v(Q_\alpha)\}_{\alpha=1}^{2m}
    =2m.
\end{equation}
\end{theorem}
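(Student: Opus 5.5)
The plan is to treat left and right injectivity separately, since they are mirror images of each other. The whole argument rests on two standard facts. The first is that the binary symplectic form \eqref{eq:symplectic_product_m_qubit} on $\mathbb F_2^{2m}$ is nondegenerate. The second is that the Pauli strings $\mathcal P_m$ form a complex basis of the $m$-qubit operator algebra. The key step in both directions is to reduce $[A_j,h_j^{(2)}]=0$ to the statement that $A$ commutes with every $P_\alpha$, or with every $Q_\alpha$ for right injectivity.

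\emph{Sufficiency.} Assume both rank conditions in \eqref{eq:m_qubit_rank_condition} hold. Because the rank is $2m$ and there are exactly $2m$ labels, the vectors $v(Q_\alpha)$ are linearly independent over $\mathbb F_2$. In particular they are pairwise distinct, so the $Q_\alpha$ are distinct Pauli strings and are therefore linearly independent as matrices. For a traceless left operator $A$ we have
\begin{equation}
[A_j,h_j^{(2)}]=\sum_\alpha c_\alpha[A,P_\alpha]_jQ_{\alpha,j+1}.
\end{equation}
This vanishes if and only if $[A,P_\alpha]=0$ for every $\alpha$, using $c_\alpha\neq0$ and the same argument as in the proof of Theorem~\ref{theorem:spin_half_minimal_injectivity}.

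Next, the labels $v(P_\alpha)$ span $\mathbb F_2^{2m}$. Every Pauli string is therefore, up to phase, a product of the $P_\alpha$, so the $P_\alpha$ generate the full matrix algebra $M_{2^m}(\mathbb C)$. An operator commuting with all of them is thus central, i.e.\ $A\propto I^{\otimes m}$, and tracelessness forces $A=0$. The same argument with $P$ and $Q$ interchanged gives right injectivity.

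\emph{Necessity.} Suppose instead that the left rank is less than $2m$. Let $W=\operatorname{span}\{v(P_\alpha)\}$. Since the symplectic form is nondegenerate, its symplectic complement $W^\perp$ has dimension $2m-\dim W\ge1$. Choose a nonzero $u\in W^\perp$ and let $A$ be a Pauli string with $v(A)=u$. Then $A\neq I^{\otimes m}$, so $A$ is traceless. By \eqref{eq:symplectic_product_m_qubit}, $A$ commutes with each $P_\alpha$, and hence with $h_j^{(2)}$ term by term, so left injectivity fails. This direction needs no distinctness of the $Q_\alpha$. The case of right rank less than $2m$ is identical.

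\emph{Main obstacle.} The step that needs care is the reduction from $[A,h]=0$ to $[A,P_\alpha]=0$ for all $\alpha$. It relies on the right labels being independent, so that no two terms share the same $Q_\alpha$ and cancel each other. This is why the theorem demands both rank conditions together rather than one for each side. A secondary point to check is the commutant claim for general, non-Pauli $A$. I would handle it by expanding $A$ in the Pauli basis: conjugation by each $P_\alpha$ acts diagonally on that basis with signs $\pm1$, so every Pauli component of $A$ must itself commute with all $P_\alpha$, and the spanning property then forces each such component to be the identity.
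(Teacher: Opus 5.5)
Your proposal is correct and follows essentially the same route as the paper. Full rank of the right labels makes the $Q_\alpha$ linearly independent, which reduces $[A_j,h_j^{(2)}]=0$ to $[A,P_\alpha]=0$ for all $\alpha$; nondegeneracy of the symplectic form then makes the common commutant trivial (sufficiency) or yields a commuting nonidentity Pauli string (necessity). Your phrasing of the commutant step through the $P_\alpha$ generating $M_{2^m}(\mathbb C)$ is only a cosmetic variant of the paper's Pauli-basis expansion, and your remark that conjugation by $P_\alpha$ acts diagonally on the Pauli basis makes explicit why each Pauli component of $A$ must separately commute, a step the paper leaves implicit.
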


\begin{proof}
We prove the left condition; the right one is identical.  Let $A$ be a traceless operator on the left composite site.  Then
\begin{equation}
    [A_j,h_j^{(2)}]
    =\sum_{\alpha=1}^{2m}c_\alpha[A,P_\alpha]_jQ_{\alpha,j+1}.
\end{equation}
If the right Pauli vectors have full rank, then the $Q_\alpha$ are linearly independent over $\mathbb C$.  Therefore the vanishing of the commutator implies $[A,P_\alpha]=0$ for every $\alpha$.

Expand $A$ in the Pauli basis, $A=\sum_{R\in\mathcal P_m}a_RR$.  A Pauli component $R$ commutes with all $P_\alpha$ precisely when
\begin{equation}
    \langle v(R),v(P_\alpha)\rangle_{\rm sp}=0
\end{equation}
for all $\alpha$. Thus the possible nonidentity components of $A$ are labeled by the symplectic orthogonal complement of $\operatorname{span}_{\mathbb F_2}\{v(P_\alpha)\}$.  Because the symplectic form is nondegenerate, this complement is trivial if and only if the binary labels $v(P_\alpha)$ span $\mathbb F_2^{2m}$.  Hence the only common commutant is the identity, and a traceless $A$ must vanish.

Conversely, if the left rank is smaller than $2m$, then there exists a nonzero binary vector in the symplectic orthogonal complement.  The corresponding nonidentity Pauli string commutes with all $P_\alpha$, violating injectivity.  The same argument applies to the right operators.  Since an $r$-term Pauli-string product density has at most $r$ left binary labels and at most $r$ right binary labels, fewer than $2m$ terms cannot have full binary rank on both sides.
\end{proof}

This theorem is a statement about Hamiltonian densities written as sums of Pauli-string product operators.  It does not assert a minimal term count for arbitrary non-Pauli operator decompositions.  In the applications below, $m=2$.  Thus a minimal injective composite density within this Pauli-string class must contain four product terms whose left and right Pauli labels each form a basis of $\mathbb F_2^4$.

\subsection{Minimal models from the deformed Fredkin spin chain}
\label{subsec:minimal_models_from_deformed_fredkin}

We now apply the injectivity rank criterion to the two-site density in Eq.~\eqref{deformed_equivalent_2}.  Write 
\begin{equation}
    h_n^{(2)}(t)=\sum_{\alpha=1}^{13}\gamma_\alpha(t)T_\alpha,
    \qquad
    T_\alpha=(P_\alpha)(Q_\alpha),
\end{equation}
where the shorthand $T_\alpha=(P_\alpha)(Q_\alpha)$ denotes $P_{\alpha,n}Q_{\alpha,n+1}$, with $P_\alpha$ and $Q_\alpha$ acting on the left and right composite sites, respectively.  The thirteen terms and their binary symplectic vectors are listed in Table~\ref{tab:fredkin_terms_binary}.

\begin{table}[t]
\caption{Pauli-string operators in the Hamiltonian density $h_n^{(2)}(t)$ given by Eq.~\eqref{deformed_equivalent_2}.  The binary vectors are written in the order $v(P)=(x_1,x_2,z_1,z_2)$.}
\label{tab:fredkin_terms_binary}
\centering
%\footnotesize
\begin{tabular}{c|c|c|c|c}
\hline\hline
~$\alpha$~ & ~$\gamma_\alpha(t)$~ & ~$T_\alpha=(P_\alpha)(Q_\alpha)$~ & ~$v(P_\alpha)$~ & ~$v(Q_\alpha)$~\\
\hline
1  & $t^2-1$  & $(IZ)(IZ)$ & $\mathtt{0001}$ & $\mathtt{0001}$\\
2  & $t^2-1$  & $(ZI)(ZI)$ & $\mathtt{0010}$ & $\mathtt{0010}$\\
3  & $-2t$    & $(IX)(XI)$ & $\mathtt{0100}$ & $\mathtt{1000}$\\
4  & $-2t$    & $(IY)(YI)$ & $\mathtt{0101}$ & $\mathtt{1010}$\\
5  & $-2t^2$  & $(IZ)(ZI)$ & $\mathtt{0001}$ & $\mathtt{0010}$\\
6  & $t$      & $(IX)(XZ)$ & $\mathtt{0100}$ & $\mathtt{1001}$\\
7  & $t$      & $(IY)(YZ)$ & $\mathtt{0101}$ & $\mathtt{1011}$\\
8  & $t$      & $(XX)(ZI)$ & $\mathtt{1100}$ & $\mathtt{0010}$\\
9  & $t$      & $(YY)(ZI)$ & $\mathtt{1111}$ & $\mathtt{0010}$\\
10 & $-t$     & $(IZ)(XX)$ & $\mathtt{0001}$ & $\mathtt{1100}$\\
11 & $-t$     & $(IZ)(YY)$ & $\mathtt{0001}$ & $\mathtt{1111}$\\
12 & $-t$     & $(ZX)(XI)$ & $\mathtt{0110}$ & $\mathtt{1000}$\\
13 & $-t$     & $(ZY)(YI)$ & $\mathtt{0111}$ & $\mathtt{1010}$\\
\hline\hline
\end{tabular}
\end{table}

Injectivity captures the idea that the local interaction is sufficiently rich to leave no nontrivial one-site commutant.  To locate the minimal boundary of this property, we apply Theorem~\ref{theorem:m_qubit_minimal_injectivity} to four-term subdensities of the deformed Fredkin Hamiltonian. The following theorem gives the sixteen minimal injective models contained in the deformed Fredkin Hamiltonian density.

\begin{theorem}
\label{theorem:minimal_injective_fredkin_four_terms}
Let $S$ be a four-element subset of labels in $\{1,\ldots,13\}$.  Define the Hamiltonian density
\begin{equation}
\label{eq:h_S^2}
    h_S^{(2)}=\sum_{\alpha\in S}g_\alpha T_\alpha,
    \qquad g_\alpha\neq0,
\end{equation}
where the $T_\alpha$ are the Pauli-string operators in Table~\ref{tab:fredkin_terms_binary}.  Then $h_S^{(2)}$ is injective if and only if 
\begin{equation}
\label{eq:sixteen_term_set_condition}
    S=\{\alpha_1,\alpha_2,\alpha_3,\alpha_4\},
\end{equation}
with $\alpha_1\in\{6,7\}$, $\alpha_2\in\{8,9\}$, $\alpha_3\in\{10,11\}$, and $\alpha_4\in\{12,13\}$. 
\end{theorem}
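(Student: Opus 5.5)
The plan is to reduce the statement to a finite linear-algebra check over $\mathbb F_2$, using Theorem~\ref{theorem:m_qubit_minimal_injectivity} with $m=2$. Since $S$ has exactly $2m=4$ elements, $h_S^{(2)}$ is injective if and only if the four left labels $v(P_\alpha)$, $\alpha\in S$, form a basis of $\mathbb F_2^4$, and the four right labels $v(Q_\alpha)$ do as well. The nonzero coefficients $g_\alpha$ play no role. Everything can then be read off Table~\ref{tab:fredkin_terms_binary}. Rather than scan all $\binom{13}{4}$ subsets, I will first find coordinates that only a few terms can supply, and use them to force part of $S$.

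\emph{Forced groups.} Among the left labels, only $v(P_8)=\mathtt{1100}$ and $v(P_9)=\mathtt{1111}$ have $x_1=1$. A spanning left set must reach this coordinate, so $S\cap\{8,9\}\neq\emptyset$. Among the right labels, only $v(Q_{10})=\mathtt{1100}$ and $v(Q_{11})=\mathtt{1111}$ have $x_2=1$, so $S\cap\{10,11\}\neq\emptyset$.

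\emph{Exactly one from each forced group.} Terms $8$ and $9$ both have right label $\mathtt{0010}$, so choosing both makes the right labels repeat. Likewise terms $10$ and $11$ both have left label $\mathtt{0001}$. Hence $S$ contains exactly one element of $\{8,9\}$ and exactly one of $\{10,11\}$. The chosen element of $\{10,11\}$ contributes left label $\mathtt{0001}$, and the chosen element of $\{8,9\}$ contributes right label $\mathtt{0010}$. This immediately excludes terms $1$ and $5$, whose left label is $\mathtt{0001}$. It also excludes term $2$, whose right label is $\mathtt{0010}$. The remaining two elements of $S$ must therefore come from $\{3,4,6,7,12,13\}$.

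\emph{Completing the basis.} Terms $3$ and $12$ share the right label $\mathtt{1000}$, and terms $4$ and $13$ share $\mathtt{1010}$. Terms $3$ and $6$ share the left label $\mathtt{0100}$, and terms $4$ and $7$ share $\mathtt{0101}$. Using these coincidences I will rule out every pair containing $3$ or $4$ by a short explicit check. For example, with $\{3,13\}$ the right labels $\mathtt{1000},\mathtt{1010},\mathtt{0010}$ are dependent, and $\{3,4\}$ fails similarly. I will also rule out pairs lying entirely inside $\{6,7\}$ or inside $\{12,13\}$. This leaves exactly one element of $\{6,7\}$ and one of $\{12,13\}$.

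\emph{Sufficiency.} It remains to show that all $2^4=16$ choices $\alpha_1\in\{6,7\}$, $\alpha_2\in\{8,9\}$, $\alpha_3\in\{10,11\}$, $\alpha_4\in\{12,13\}$ really give full rank on both sides. The left labels are $\{\mathtt{0100}\text{ or }\mathtt{0101},\ \mathtt{1100}\text{ or }\mathtt{1111},\ \mathtt{0001},\ \mathtt{0110}\text{ or }\mathtt{0111}\}$. Reducing modulo $\mathtt{0001}$, they become $\mathtt{0100},\ \mathtt{1100}\text{ or }\mathtt{1110},\ \mathtt{0110}$, which are clearly independent. The right labels reduce in the same way modulo $\mathtt{0010}$. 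The main obstacle is organizing the pair exclusions in the completion step: when term $9$ is present, its left label $\mathtt{1111}$ partly supplies the $z_1$ direction. Pairs such as $\{6,7\}$ therefore cannot be dismissed by a single-coordinate argument. They need an explicit rank computation in each of the four $(\{8,9\},\{10,11\})$ branches, which I will present as a short table rather than prose.
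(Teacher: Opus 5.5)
Your proof is correct, and it takes a different route from the paper. Both arguments start by using Theorem~\ref{theorem:m_qubit_minimal_injectivity} to turn injectivity of a four-term density into full $\mathbb F_2$-rank of the left and right label matrices.

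From there the paper simply enumerates all $\binom{13}{4}=715$ four-element subsets and evaluates $\det_{\mathbb F_2}V_L(S)\det_{\mathbb F_2}V_R(S)$ for each one. It then records the outcome as the factorized generating identity in $\mathbb F_2[\xi_1,\ldots,\xi_{13}]$; that polynomial is only bookkeeping for a computer search the reader has to trust.

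You replace the enumeration with a structured elimination:
\begin{itemize}
\item you use the forced coordinates: $x_1$ on the left comes only from $8,9$, and $x_2$ on the right only from $10,11$;
\item you use the label coincidences;
\item you check sufficiency by passing to quotients.
\end{itemize}
This version can be checked by hand. It also explains why terms $1,2,5$ never occur: each duplicates one of the forced labels $\mathtt{0001}$ (left) or $\mathtt{0010}$ (right). And it explains why the answer factorizes into four independent binary choices.

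The obstacle you anticipate in the completion step is not real, so you do not need a branch-by-branch table. The labels $\mathtt{0001}$ (left) and $\mathtt{0010}$ (right) are present in every branch, so the quotient trick from your sufficiency step also handles the exclusions.
\begin{itemize}
\item Modulo $\mathtt{0001}$, the left labels of $3,4,6,7$ all fall in the class of $\mathtt{0100}$, and those of $12,13$ in the class of $\mathtt{0110}$.
\item Modulo $\mathtt{0010}$, the right labels of $3,4,12,13$ all fall in the class of $\mathtt{1000}$, and those of $6,7$ in the class of $\mathtt{1001}$.
\end{itemize}
Two labels in the same class are either equal or sum to the forced label, so a valid pair must lie in different classes on both sides. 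The left condition puts one element in $\{12,13\}$. That element sits in the right class $\{3,4,12,13\}$, so the right condition forces the other element into $\{6,7\}$. For example, $\{6,7\}$ is excluded immediately by $\mathtt{0100}+\mathtt{0101}=\mathtt{0001}$, whether $8$ or $9$ is present.
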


\begin{proof}
For a given label set $S$, let $V_L(S)$ and $V_R(S)$ be the $4\times4$ binary matrices whose rows are $v(P_\alpha)$ and $v(Q_\alpha)$ for $T_\alpha=(P_\alpha)(Q_\alpha)$, respectively.  By Theorem~\ref{theorem:m_qubit_minimal_injectivity}, injectivity is equivalent to both binary matrices having rank $4$.  Equivalently, both determinants over $\mathbb F_2$ are nonzero:
\begin{equation}
    \det_{\mathbb F_2}V_L(S)=\det_{\mathbb F_2}V_R(S)=1.
\end{equation}
Here $\det_{\mathbb F_2}$ denotes the determinant over $\mathbb F_2$.  Enumerating all four-element label sets and evaluating the two binary determinants gives
\begin{multline}
\label{eq:minor_generating_identity}
    \sum_{\substack{S\subset\{1,\ldots,13\}\\ |S|=4}}
    \det_{\mathbb F_2}V_L(S)
    \det_{\mathbb F_2}V_R(S)\times\prod_{\alpha\in S}\xi_\alpha\\
    =(\xi_6+\xi_7)(\xi_8+\xi_9)
    (\xi_{10}+\xi_{11})(\xi_{12}+\xi_{13}).
\end{multline}
Here $\xi_\alpha$ is a bookkeeping variable attached to the term $T_\alpha$ in Table~\ref{tab:fredkin_terms_binary}.  More precisely, Eq.~\eqref{eq:minor_generating_identity} is an identity in the commutative polynomial ring $\mathbb F_2[\xi_1,\ldots,\xi_{13}]$.  These variables only record which Pauli-string terms are chosen.  For a label set $S$, the monomial $\prod_{\alpha\in S}\xi_\alpha$ records the subset of terms included in $h_S^{(2)}$.  For example, the label set $S=\{6,8,10,12\}$ is represented by the monomial $\xi_6\xi_8\xi_{10}\xi_{12}$.  The coefficient of $\prod_{\alpha\in S}\xi_\alpha$ is one precisely when both determinants are nonzero.  Hence the injective models are exactly the sixteen choices in Eq.~\eqref{eq:sixteen_term_set_condition}. 
\end{proof}

Theorem~\ref{theorem:minimal_injective_fredkin_four_terms} shows that one must choose exactly one term from each of the four pairs $\{T_6,T_7\}$, $\{T_8,T_9\}$, $\{T_{10},T_{11}\}$, and $\{T_{12},T_{13}\}$.  Therefore the deformed Fredkin density contains exactly $2^4=16$ minimal injective models.  The first five terms in Table~\ref{tab:fredkin_terms_binary} never appear in these minimal sets.  They are present in the full deformed Fredkin Hamiltonian, but including any of them in a four-term density leaves either the left or the right binary span below rank $4$.  Consequently, the sixteen minimal models are insensitive to the special point $t=1$, since the terms proportional to $t^2-1$ are not used.

\subsection{Minimal models in qubit representation}
\label{subsec:minimal_models_qubit_representation}

A minimal injective Hamiltonian density in the composite spin representation does not automatically correspond to a translation-invariant three-site spin-$1/2$ Hamiltonian density.  This is because a generic composite nearest-neighbor term becomes a four-site operator in the original spin-$1/2$ chain:
\begin{equation}
    (AB)_n(CD)_{n+1}
    \longleftrightarrow
    A_{2n-1}B_{2n}C_{2n+1}D_{2n+2}.
\end{equation}
Thus a composite Hamiltonian density can be unblocked to a genuine three-site Hamiltonian density only when its terms occur in the paired form specified by the following criterion.

\begin{lemma}
\label{theorem:unblocking_to_three_site_qubit}
A two-qubit composite nearest-neighbor Hamiltonian density $h_n^{(2)}$ is the blocked form of a one-site translation-invariant spin-$1/2$ three-site density
\begin{equation}
    \widetilde h_j^{(3)}=\sum_{A,B,C}\kappa_{ABC}A_jB_{j+1}C_{j+2}
\end{equation}
if and only if
\begin{equation}
\label{eq:partner_pair_form}
    h_n^{(2)}=\sum_{A,B,C}\kappa_{ABC}
    \big((AB)_n(CI)_{n+1}+(IA)_n(BC)_{n+1}\big),
\end{equation}
where $\kappa_{ABC}$ denotes the coefficients. 
\end{lemma}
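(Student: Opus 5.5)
The plan is to compare the two Hamiltonians directly, as operators on the same chain, through the blocking isomorphism $n=(2j-1,2j)$. For the ``only if'' direction, start from $\widetilde H=\sum_{j=1}^{N}\widetilde h_j^{(3)}$ with $N=2L$ and split the sum over $j$ by parity. A string $A_jB_{j+1}C_{j+2}$ with $j=2n-1$ odd covers sites $2n-1,2n,2n+1$, namely both qubits of composite site $n$ and the first qubit of site $n+1$. In composite notation it becomes $(AB)_n(CI)_{n+1}$. With $j=2n$ even, the string covers $2n,2n+1,2n+2$, the second qubit of site $n$ and both qubits of site $n+1$, giving $(IA)_n(BC)_{n+1}$. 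Summing over $n$, the even and odd translates of each spin-$1/2$ term with coefficient $\kappa_{ABC}$ combine into one composite density term of the form \eqref{eq:partner_pair_form}. Because the blocking is a local algebra isomorphism, $\widetilde H=\sum_n h_n^{(2)}$ with this $h_n^{(2)}$.

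For the ``if'' direction, read the same computation backwards. Given a composite density of the form \eqref{eq:partner_pair_form}, each pair $(AB)_n(CI)_{n+1}+(IA)_n(BC)_{n+1}$ unblocks to $A_{2n-1}B_{2n}C_{2n+1}+A_{2n}B_{2n+1}C_{2n+2}$. Summing over $n$ produces $\sum_j A_jB_{j+1}C_{j+2}$. The result is therefore a one-site translation-invariant three-site density with coefficients $\kappa_{ABC}$.

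The main obstacle is making the ``only if'' statement well posed. The composite density is defined only modulo terms that vanish after the translational sum and modulo how on-site pieces are allocated, so one must argue that the Pauli-string expansion of $\sum_n h_n^{(2)}$ is unique. I would use the linear independence of distinct Pauli strings on the full chain to fix this. If $\sum_n h_n^{(2)}$ equals a one-site translation-invariant three-site operator, invariance under the single-qubit shift $j\to j+1$ forces every string supported on $(2n-1,2n,2n+1)$ to appear together with its shifted partner supported on $(2n,2n+1,2n+2)$, with the same coefficient. Conversely, no composite string can carry a nontrivial factor on all four qubits $2n-1,\dots,2n+2$, since that would be four-site. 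In the composite nearest-neighbor density, strings whose left factor has a nontrivial first qubit must therefore end with $I$ on the right, and strings whose right factor has a nontrivial second qubit must start with $I$ on the left. Together with the pairing forced by the shift, this yields exactly \eqref{eq:partner_pair_form}.

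A small remaining care point concerns shorter terms, such as strings with $A=I$ or $C=I$ that are genuinely two-site. These should be assigned consistently to one partner so they are not double counted. Since the lemma concerns genuine three-site densities $\widetilde h_j^{(3)}$, I would simply include such terms in the sum over $A,B,C$, allowing identity entries.
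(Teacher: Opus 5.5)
Your proof is correct and follows the paper's argument. You use the same parity split, $j=2n-1\mapsto (AB)_n(CI)_{n+1}$ and $j=2n\mapsto (IA)_n(BC)_{n+1}$, with one-site translation invariance forcing equal coefficients, and you get the converse by reading the identification backwards. Your extra remarks on well-posedness go beyond what the paper needs: linear independence of Pauli strings, the absence of four-qubit strings, and the reallocation ambiguity for identity entries. The paper implicitly takes ``blocked form'' to mean the density produced by the parity split, so it does not address these points.
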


%Thus each three-site Pauli string $ABC$ gives the partner pair $(AB)(CI)$ and $(IA)(BC)$ with equal coefficients.

\begin{proof}
Starting from the three-site Hamiltonian density $A_jB_{j+1}C_{j+2}$, we separate the one-site translation-invariant sum into odd and even starting positions.  For $j=2n-1$,
\begin{equation}
    A_{2n-1}B_{2n}C_{2n+1}=(AB)_n(CI)_{n+1},
\end{equation}
whereas for $j=2n$,
\begin{equation}
    A_{2n}B_{2n+1}C_{2n+2}=(IA)_n(BC)_{n+1}.
\end{equation}
One-site translation invariance requires the odd- and even-starting contributions to have the same coefficient.  This proves necessity.  Conversely, any density of the form in Eq.~\eqref{eq:partner_pair_form} unblocks to the one-site translation-invariant three-site density above, so the condition is also sufficient.
\end{proof}

For the sixteen minimal injective models obtained from the deformed Fredkin spin chain, only four partner pairs are relevant.  Their corresponding three-site Hamiltonian densities are
\begin{subequations}
\label{eq:unblocked_partner_pairs}
\begin{align}
    (T_6,T_8)&\leftrightarrow X_jX_{j+1}Z_{j+2}, \label{eq:unblocked_partner_pair_xxz}\\
    (T_7,T_9)&\leftrightarrow Y_jY_{j+1}Z_{j+2}, \label{eq:unblocked_partner_pair_yyz}\\
    (T_{10},T_{12})&\leftrightarrow Z_jX_{j+1}X_{j+2}, \label{eq:unblocked_partner_pair_zxx}\\
    (T_{11},T_{13})&\leftrightarrow Z_jY_{j+1}Y_{j+2}. \label{eq:unblocked_partner_pair_zyy}
\end{align}
\end{subequations}
The two terms in each partner pair must have the same coefficient. 

%Combining these pairs with the minimal injectivity criterion in Theorem~\ref{theorem:m_qubit_minimal_injectivity}, we obtain four minimal models that are translational spin-1/2 three-site models, summarized in Table~\ref{tab:minimal_unblocked_models}.

A minimal injective model must choose one term from each of the four pairs in Eq.~\eqref{eq:sixteen_term_set_condition}, while Lemma~\ref{theorem:unblocking_to_three_site_qubit} requires the chosen terms to form complete partner pairs.  Hence the only possibilities are to choose either $(T_6,T_8)$ or $(T_7,T_9)$, and independently either $(T_{10},T_{12})$ or $(T_{11},T_{13})$.  These four choices are summarized in Table~\ref{tab:minimal_unblocked_models}.

\begin{table}[t]
\caption{The four translationally invariant spin-$1/2$ chains with three-site interactions obtained from the sixteen minimal models in the composite spin representation.  The notation $XXZ$, for example, denotes $X_jX_{j+1}Z_{j+2}$.}
\label{tab:minimal_unblocked_models}
\centering
\begin{tabular}{c|c|c}
\hline\hline
Model & Partner pairs of $h_n^{(2)}$ & $\widetilde h_j^{(3)}$\\
\hline
I & $(T_6,T_8),(T_{10},T_{12})$ & $c_1XXZ+c_2ZXX$\\
II & $(T_6,T_8),(T_{11},T_{13})$ & $c_1XXZ+c_2ZYY$\\
III & $(T_7,T_9),(T_{10},T_{12})$ & $c_1YYZ+c_2ZXX$\\
IV & $(T_7,T_9),(T_{11},T_{13})$ & $c_1YYZ+c_2ZYY$\\
\hline\hline
\end{tabular}
\end{table}

The equal-coefficient condition within each partner pair is essential: unequal coefficients would give a two-site-periodic qubit Hamiltonian rather than a one-site translation-invariant one.  Each one-site translation-invariant model contains precisely two three-site Pauli strings, so the minimal four-term structure in the composite representation becomes a two-term structure in the spin-$1/2$ representation. 

%Their integrability properties are not analyzed here; they are discussed in detail in Sec.~\ref{subsec:three_site_model_tests}.

\section{Integrability tests for the minimal injective models}
\label{sec:integrability_minimal_composite}

Having identified the sixteen minimal injective models, we now determine which of them are integrable or nonintegrable.  We first work in the two-qubit composite representation in Sec.~\ref{subsec:reshetikhin_hokkyo_minimal}, where the Hamiltonians are nearest-neighbor and Reshetikhin's and Hokkyo's criteria apply directly. We then return to the physical spin-$1/2$ chain in Sec.~\ref{subsec:three_site_model_tests} and obtain the classification of one-site translation-invariant three-site models.

\subsection{Reshetikhin-integrable and Hokkyo-nonintegrable models}
\label{subsec:reshetikhin_hokkyo_minimal}

We consider the sixteen minimal injective models obtained from the deformed Fredkin spin chain in Theorem~\ref{theorem:minimal_injective_fredkin_four_terms}. The Hamiltonian is denoted by
\begin{equation}
\label{eq:minimal_term_set_density_general}
    H_S=\sum_nh^{(2)}_{S,n},
\end{equation}
where the Hamiltonian density has the form
\begin{equation}
h_S^{(2)}=g_{\alpha_1}T_{\alpha_1}+g_{\alpha_2}T_{\alpha_2}+g_{\alpha_3}T_{\alpha_3}+g_{\alpha_4}T_{\alpha_4},
\end{equation}
with $\prod_{k=1}^4 g_{\alpha_k}\neq0$. The label indices are chosen from $\alpha_1\in\{6,7\}$, $\alpha_2\in\{8,9\}$, $\alpha_3\in\{10,11\}$, and $\alpha_4\in\{12,13\}$, and correspond to the operators in Table~\ref{tab:fredkin_terms_binary}. For convenience, we write the label set as $S=\{\alpha_1,\alpha_2,\alpha_3,\alpha_4\}$ and the coefficients as $\{g_{\alpha_1},g_{\alpha_2},g_{\alpha_3},g_{\alpha_4}\} = \{a,b,c,d\}$.

The integrability criterion based on the Reshetikhin condition, namely Theorem~\ref{theorem:Reshetikhin_rigorous}, asks whether there exists a three-local charge of the form
\begin{equation}
\label{eq:Q3_reshetikhin_minimal}
    Q_S^{(3)}=\sum_n\left(\iota_2(h_S^{(2)})_n+\mathbf A_{S,n}^{(2)}\right)
\end{equation}
that commutes with $H_S$. Here the map $\iota_2$ is defined in Eq.~\eqref{eq:iota_2}. Applying the Reshetikhin test to the sixteen minimal injective models gives the following theorem.

\begin{theorem}
\label{thm:reshetikhin_integrable_minimal_sets}
Among the sixteen minimal injective models, the six models listed in Table~\ref{tab:reshetikhin_minimal_term_sets} satisfy the Reshetikhin condition. For each of them, $Q_S^{(3)}$ is a conserved charge, and the corresponding composite chain is integrable.
\end{theorem}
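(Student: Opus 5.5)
The plan is a finite, model-by-model verification of the Reshetikhin equation, organised so that most of the bookkeeping reduces to commutation data of Pauli strings. For each label set $S=\{\alpha_1,\alpha_2,\alpha_3,\alpha_4\}$ from Theorem~\ref{theorem:minimal_injective_fredkin_four_terms}, I first tabulate the commutation relations between the four terms $T_{\alpha,n}$ and the shifted terms $T_{\beta,n\pm1}$. Two terms at the same bond $n$ act on the same pair of composite sites. Two terms at adjacent bonds overlap on one composite site. Using the binary vectors of Table~\ref{tab:fredkin_terms_binary}, each such pair either commutes or anticommutes. This is obtained by evaluating $\langle v(Q_\alpha),v(P_\beta)\rangle_{\rm sp}$ on the shared site, together with the analogous sum over both sites for same-bond pairs. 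Since all terms are Pauli strings, every commutator $[T,T']$ is either $0$ or $2TT'$.

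It follows that $\iota_2(h_S^{(2)})_n=\sum_{\alpha,\beta}g_\alpha g_\beta[T_{\alpha,n},T_{\beta,n+1}]$ is an explicit sum of three-site Pauli strings. This commutation table is the frustration graph of Sec.~\ref{Sec:frustration_graph}, restricted to nearest bonds. It already suggests which six sets should work, namely those whose local frustration graph contains no claw. I would use this only to select the candidates and not as the proof.

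Next I compute $[Q_S^{(3)},H_S]$ and organise it by length, with the goal of showing that all lengths vanish for a suitable two-local $\mathbf A_S^{(2)}$.

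\begin{itemize}
\item \textbf{Length four.} The length-four part comes only from $[\iota_2(h)_n,h_{n+2}]$ and $[\iota_2(h)_n,h_{n-1}]$. For the Reshetikhin ansatz it cancels after the translational sum by the Jacobi identity, since $[[h_n,h_{n+1}],h_{n+2}]+[h_n,[h_{n+1},h_{n+2}]]$ telescopes. This step is generic.
\item \textbf{Length three.} Here the real content lies. The contributions are $[\iota_2(h)_n,h_n]$, $[\iota_2(h)_n,h_{n+1}]$, and the boundary terms $[\mathbf A_n^{(2)},h_{n\pm1}]$.
\item \textbf{Length two.} The remaining two-local terms are then checked directly.
\end{itemize}

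For the length-three part, my first try for the correction is
\[
\mathbf A^{(2)}_{S,n}=\sum_{\alpha<\beta}\mu_{\alpha\beta}\,T_{\alpha,n}T_{\beta,n},
\]
built from products of anticommuting same-bond terms, with a possible single-term admixture $\sum_\alpha\nu_\alpha T_{\alpha,n}$. This is the natural correction for free-fermion-type Hamiltonians. It is also consistent with the pre-screening argument of Sec.~\ref{Sec:sufficient_nonintegrability}: $\mathbf A^{(2)}$ must have left factors in $\mathcal B_l$ and right factors in $\mathcal B_r$ in order to cancel anything at length three.

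Each surviving length-three string then gives a linear equation in $\mu_{\alpha\beta}$, $\nu_\alpha$ with coefficients that are cubic in $a,b,c,d$. For the six good sets, I expect this system to be solvable for arbitrary nonzero $a,b,c,d$, and I would exhibit the explicit solution in a table. For the other ten sets it should fail, and some of them are handled later by Hokkyo's criterion.

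The main obstacle is the length-three cancellation. The difficulty is not conceptual but combinatorial. Strings generated by $[\iota_2(h)_n,h_{n+1}]$ may coincide, up to sign, with strings from $[\iota_2(h)_{n-1},h_{n-1}]$ or from the correction term. Sign conventions for $Y=iXZ$ must therefore be tracked carefully. I would organise this step as follows.

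\begin{itemize}
\item Group the output strings by their middle composite-site factor, since only strings sharing it can cancel, as in the proof of Theorem~\ref{theorem:nonintegrable_deformed_Fredkin}.
\item Use the symmetry exchanging $X\leftrightarrow Y$ on both qubits (which maps $T_6\leftrightarrow T_7$, $T_8\leftrightarrow T_9$, etc.) to reduce the number of independent cases.
\item Similarly use the spatial-reflection/left-right exchange symmetry of the table.
\end{itemize}

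After this reduction, only two or three representative sets should need a full calculation. Once the length-three and length-two parts vanish, $Q_S^{(3)}$ is conserved, and Theorem~\ref{theorem:Reshetikhin_rigorous} yields the infinite commuting hierarchy and hence integrability of the composite chain.
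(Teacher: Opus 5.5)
This is essentially the paper's proof. Like the paper, you do a finite Pauli-string verification of $[Q_S^{(3)},H_S]=0$. You reduce by the $\pi/2$ rotation about $Z$ ($X\mapsto Y$, $Y\mapsto -X$), which realizes your $X\leftrightarrow Y$ relabelling with all signs $+$ on $T_6,\dots,T_{13}$. Reflection gives no further reduction, so three representatives remain. Your product ansatz contains the paper's corrections. For $\{6,8,10,12\}$ one checks $2iab(XI)(YZ)-2icd(ZY)(IX)=-ab[T_6,T_8]-cd[T_{10},T_{12}]$, and for $\{6,9,10,13\}$ and $\{7,8,11,12\}$ the correction is zero.

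Drop your remark that $\mathbf A^{(2)}$ must have left factors in $\mathcal B_l$ and right factors in $\mathcal B_r$; it is false. The pre-screening $\mathcal B_{\le}^{(2)}\subset\mathcal B_{lr}^{(2)}$ applies to two-local operators whose length-three output must cancel \emph{among themselves}. The correction's output instead cancels against $[\iota_2(h),H]$. For $\{6,8,10,12\}$ the string $(XI)(YY)(XZ)$ arises in $[\iota_2(h)_n,h_n]$ only from $[2ia^2(IX)(XY)(XZ),\,b(XX)(ZI)]$. It can be cancelled only by $[\mathbf A^{(2)}_n,h_{n+1}]$ with $\mathbf A^{(2)}_n$ having left factor $XI$. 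But $XI\notin\mathcal B_l=\{IX,XX,IZ,ZX\}$, so imposing your restriction would make the system unsolvable.

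One smaller slip: the four-site cancellation is not a telescoping. After reindexing, the four-site part is $\sum_n\bigl([[h_n,h_{n+1}],h_{n+2}]-[h_n,[h_{n+1},h_{n+2}]]\bigr)$. This vanishes term by term by the Jacobi identity, because $[h_n,h_{n+2}]=0$. With the ``$+$'' sign you wrote, it would not cancel.
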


\begin{table}[t]
\centering
\renewcommand{\arraystretch}{1.3}
\caption{The six minimal injective models satisfying the Reshetikhin condition. Here $\mathbf A_S^{(2)}$ is a two-local compensating term required for the Reshetikhin condition.}
\label{tab:reshetikhin_minimal_term_sets}
\begin{tabular}{c|c}
\hline\hline
Label index $S$ & $\mathbf A_S^{(2)}$\\
\hline
$\{6,8,10,12\}$ & $2iab(XI)(YZ)-2icd(ZY)(IX)$\\
$\{6,9,10,13\}$ & $0$\\
$\{6,9,11,12\}$ & $-2iac(IY)(ZX)+2ibd(XZ)(YI)$\\
$\{7,8,10,13\}$ & $2iac(IX)(ZY)-2ibd(YZ)(XI)$\\
$\{7,8,11,12\}$ & $0$\\
$\{7,9,11,13\}$ & $-2iab(YI)(XZ)+2icd(ZX)(IY)$\\
\hline\hline
\end{tabular}
\end{table}

\begin{proof}
The six models in Table~\ref{tab:reshetikhin_minimal_term_sets} are related pairwise by a simple one-site unitary symmetry.  Namely, a rotation about the $Z$ axis maps
\begin{equation}
    X\mapsto Y,\qquad
    Y\mapsto -X,\qquad
    Z\mapsto Z,
\end{equation}
up to irrelevant overall signs of Pauli-string operators. Under this transformation,
\begin{equation}
\begin{gathered}
    \{6,8,10,12\}\leftrightarrow\{7,9,11,13\},\\
    \{6,9,10,13\}\leftrightarrow\{7,8,11,12\},\\
    \{6,9,11,12\}\leftrightarrow\{7,8,10,13\}.
\end{gathered}
\end{equation}
It is therefore enough to verify the Reshetikhin condition for one representative of each pair.

Expand a general two-local correction term as
\begin{equation}
    \mathbf A_S^{(2)}=\sum_{P,Q\in\mathcal P_2\setminus\{II\}}q_{PQ}(P)(Q).
\end{equation}
Substituting this ansatz into $[Q_S^{(3)},H_S]=0$ and projecting onto Pauli strings gives a finite system of linear equations for the coefficients $q_{PQ}$. Row reduction gives solutions precisely for the six models in Table~\ref{tab:reshetikhin_minimal_term_sets}.

For the representative model with label set $S=\{6,8,10,12\}$, the Hamiltonian density is
\begin{multline}
h_S^{(2)}=a(IX)(XZ)+b(XX)(ZI)\\
+c(IZ)(XX)+d(ZX)(XI).
\end{multline}
The correction term $\mathbf A_S^{(2)}$ is
\begin{equation}
    \mathbf A_S^{(2)}=2iab(XI)(YZ)-2icd(ZY)(IX).
\end{equation}
The cancellation mechanism for $[Q_S^{(3)},H_S]=0$ is displayed explicitly in Appendix~\ref{App:reshetikhin_S6810}. The two remaining inequivalent representatives are checked by the same finite Pauli-string calculation, and their symmetry-related models follow from the unitary equivalence above.  The corresponding corrections are listed in Table~\ref{tab:reshetikhin_minimal_term_sets}. 
\end{proof}

The remaining ten models do not satisfy the Reshetikhin condition; this alone does not imply nonintegrability. We next test the two-local conservation condition and then apply Hokkyo's nonintegrability criterion. This identifies two of the ten models as rigorously nonintegrable. 

\begin{theorem}
\label{thm:hokkyo_nonintegrable_minimal_sets}
The minimal injective models with the index labels
\begin{equation}
\label{eq:SX_SY_def}
    S_X=\{6,8,11,13\},
    \qquad
    S_Y=\{7,9,10,12\},
\end{equation}
with the additional restriction $b\neq c$, have no nontrivial local conserved charges and are therefore nonintegrable. 
\end{theorem}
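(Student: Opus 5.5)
The plan is to apply Theorem~\ref{theorem:Hokkyo_main_result} to the composite chains $H_{S_X}$ and $H_{S_Y}$. Under the $Z$-axis rotation $X\mapsto Y$, $Y\mapsto -X$ used in the proof of Theorem~\ref{thm:reshetikhin_integrable_minimal_sets}, $T_6\leftrightarrow T_7$, $T_8\leftrightarrow T_9$, $T_{10}\leftrightarrow T_{11}$ and $T_{12}\leftrightarrow T_{13}$, up to signs. Hence $S_X\leftrightarrow S_Y$, and it suffices to treat
\begin{equation}
h_{S_X}^{(2)}=a(IX)(XZ)+b(XX)(ZI)+c(IZ)(YY)+d(ZY)(YI).
\end{equation}
Injectivity holds by Theorem~\ref{theorem:minimal_injective_fredkin_four_terms}, since $S_X$ picks one label from each pair. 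Two hypotheses then remain to be checked: the two-local conservation condition, and the absence of any admissible three-local charge.

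For the two-local condition, we restrict candidates to $\mathcal B_{lr}^{(2)}$. Here $\mathcal B_l=\{IX,XX,IZ,ZY\}$ and $\mathcal B_r=\{XZ,ZI,YY,YI\}$, so there are $16$ candidate strings. For each non-Hamiltonian candidate we look for a length-three output of its commutator with $h^{(2)}_{n\pm1}$ that no other candidate can produce, as in the proof of Lemma~\ref{theorem:fredkin_B2_condition}. This eliminates it via Eq.~\eqref{exclude_2_local}. For the four Hamiltonian strings, we compare shared length-three outputs pairwise to obtain linear relations among their coefficients $q_T$.

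I expect this step to be where $b\neq c$ enters. The partner-like strings $(XX)(ZI)$ and $(IZ)(YY)$ overlap in a way that yields a relation involving $b$ and $c$ symmetrically. When $b=c$, an extra relative-sign combination $q_{XXZI}=-q_{IZYY}$ (or a similar one) may survive, enlarging $\mathcal B_{\le}^{(2)}$ beyond $\mathbb C H^{(2)}$. The goal is to show that for $b\neq c$ the solution space is one-dimensional, namely $q_T\propto g_T$.

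With $\mathcal B_{\le}^{(2)}=\mathbb C H^{(2)}$ established, every candidate three-local charge is $\sum_n(\iota_2(h^{(2)})_n+\mathbf A_n^{(2)})$. I would then search $[\iota_2(h^{(2)}),H]$ for a length-three Pauli string, such as $(IX)(XI)(\cdot\,\cdot)$ in the spirit of Theorem~\ref{theorem:nonintegrable_deformed_Fredkin}. This string must collect a nonzero total coefficient, a polynomial in $a,b,c,d$ that is nonvanishing, possibly proportional to $(b-c)$ or to a product of couplings. It must also be unreachable from $[\mathbf A^{(2)},h^{(2)}]$, because the required middle or end factors do not lie in $\mathcal B_l$ or $\mathcal B_r$.

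If no single string isolates cleanly, the fallback is to write the finite linear system for $q_{PQ}$, as in the Reshetikhin check, and show it is inconsistent. The failure of the Reshetikhin condition for these labels, asserted in the text after Theorem~\ref{thm:reshetikhin_integrable_minimal_sets}, is exactly this inconsistency, since Reshetikhin's ansatz coincides with the only admissible form. Theorem~\ref{theorem:Hokkyo_main_result} then excludes $k$-local charges for $3\le k\le L/2$.

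The main obstacle is the bookkeeping in the two-local step: confirming that $b\neq c$ is precisely the condition that kills the spurious second solution, and that no other hidden degeneracy appears.
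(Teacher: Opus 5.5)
Your architecture matches the paper's. You reduce to $S_X$ by the $Z$-rotation and take injectivity from Theorem~\ref{theorem:minimal_injective_fredkin_four_terms}. You then prove $\mathcal B_{\le}^{(2)}=\mathbb C H^{(2)}$ by pre-screening to $\mathcal B_{lr}^{(2)}$ and eliminating candidates by private length-three outputs. Finally you feed one uncancellable string into Theorem~\ref{theorem:Hokkyo_main_result}.

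The gap is that this last, decisive step is never carried out. The family you point to, $(IX)(XI)(\cdot\,\cdot)$, does not occur in $[\iota_2(h_{S_X}^{(2)}),H_{S_X}]$. The only $\iota_2$ strings starting with $IX$ have middle factors $XY$ or $IY$. Turning these into $XI$ would need a left factor $IY$ or $XY$, which $h_{S_X}^{(2)}$ lacks. Moreover, no product of two left factors of $h_{S_X}^{(2)}$ is proportional to $IX$. Your fallback does not rescue this. Failure of the Reshetikhin condition says only that no $\mathbf A^{(2)}$ gives $[Q,H]=0$. Theorem~\ref{theorem:Hokkyo_main_result} needs that no $\mathbf A^{(2)}$ gives even $\operatorname{len}([Q,H])\le 2$. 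That is the stronger nonexistence statement, so it does not follow from Reshetikhin failure.

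The missing ingredient is the paper's explicit obstruction $(IZ)(XX)(ZZ)$. It arises from two sources: the $\iota_2$ term $cd\,(IZ)(XI)(YI)$ commuted with $a(IX)(XZ)$, and the $\iota_2$ term $ac\,(IZ)(YZ)(XZ)$ commuted with $d(ZY)(YI)$. The two contributions have the same sign and sum to $-2acd$ in the paper's normalization. No $[\mathbf A^{(2)},H_{S_X}]$ can produce it. A right overlap would need $ZZ$ as a right factor of $h_{S_X}^{(2)}$. A left overlap is forced to use $(IZ)(YY)$, with an $\mathbf A$ factor $\propto YY\cdot XX\propto ZZ$, which commutes with $YY$. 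The string $(ZZ)(YY)(ZI)\propto abd$ works the same way.

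This coefficient has no factor $b-c$, so the restriction enters only the two-local step, and not as you predicted. The four Hamiltonian coefficients are rigidly tied for all couplings through the shared outputs $(IX)(IY)(ZI)$, $(IZ)(YZ)(XZ)$, $(IZ)(XI)(YI)$, and $(ZY)(ZX)(ZI)$. The strings $(XX)(ZI)$ and $(IZ)(YY)$ share no output at all. The degeneracy comes instead from the non-Hamiltonian string $(IZ)(ZI)$. Its only length-three output, $(IZ)(YX)(ZI)$, collects $b$ from the right overlap with $(XX)(ZI)$ and $-c$ from the left overlap with $(IZ)(YY)$. Your per-candidate elimination would find this provided you sum both overlaps. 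Every other non-Hamiltonian string in $\mathcal B_{lr}^{(2)}$ has a private output with coefficient $a$ or $b$.
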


\begin{proof}
The densities of the two models are
\begin{subequations}
\label{eq:SX_SY_densities}
\begin{align}
    h_{S_X}^{(2)}
    ={}&a(IX)(XZ)+b(XX)(ZI)\notag\\
    &+c(IZ)(YY)+d(ZY)(YI), \label{eq:SX_density}\\
    h_{S_Y}^{(2)}
    ={}&a(IY)(YZ)+b(YY)(ZI)\notag\\
    &+c(IZ)(XX)+d(ZX)(XI). \label{eq:SY_density}
\end{align}
\end{subequations}
Solving the homogeneous equations that define $\mathcal B_{\le}^{(2)}(H_S)$ gives
\begin{equation}
\label{eq:B2le_SX}
    \mathcal B_{\le}^{(2)}(H_{S_X})=
    \begin{cases}
    \mathbb C H_{S_X}^{(2)}, & b\neq c,\\[1mm]
    \mathbb C H_{S_X}^{(2)}\oplus\mathbb C\sum_n(IZ)_n(ZI)_{n+1}, & b=c,
    \end{cases}
\end{equation}
with the same result for $S_Y$ after exchanging $X$ and $Y$:
\begin{equation}
\label{eq:B2le_SY}
    \mathcal B_{\le}^{(2)}(H_{S_Y})=
    \begin{cases}
    \mathbb C H_{S_Y}^{(2)}, & b\neq c,\\[1mm]
    \mathbb C H_{S_Y}^{(2)}\oplus\mathbb C\sum_n(IZ)_n(ZI)_{n+1}, & b=c.
    \end{cases}
\end{equation}
The other eight models that violate the Reshetikhin condition also fail the two-local conservation condition. They are summarized in Table~\ref{tab:extra_B2le_nonres_sets}.  In each case, one Hamiltonian term commutes with the other three terms.  Its coefficient is therefore not fixed by the cancellation of the leading three-local commutators, leaving an extra translationally invariant two-local operator not proportional to $H_S^{(2)}$.

\begin{table}[t]
\centering
\caption{The eight models that violate the Reshetikhin condition and also fail the two-local conservation condition. The second column gives the commuting term whose translational sum is an element of $\mathcal B_{\le}^{(2)}(H_S)\setminus\mathbb C H_S^{(2)}$.}
\label{tab:extra_B2le_nonres_sets}
\begin{tabular}{c|c}
\hline\hline
~Label index~ $S$ & Commuting term\\
\hline
$\{6,8,10,13\}$ & $(IZ)(XX)$\\
$\{6,8,11,12\}$ & $(ZX)(XI)$\\
$\{6,9,10,12\}$ & $(IX)(XZ)$\\
$\{6,9,11,13\}$ & $(YY)(ZI)$\\
$\{7,8,10,12\}$ & $(XX)(ZI)$\\
$\{7,8,11,13\}$ & $(IY)(YZ)$\\
$\{7,9,10,13\}$ & $(ZY)(YI)$\\
$\{7,9,11,12\}$ & $(IZ)(YY)$\\
\hline\hline
\end{tabular}
\end{table}

It remains to show that $S_X$ and $S_Y$ fail the admissible three-local test in Eq.~\eqref{eq:Hokkyo_nonintegrable_inequality}.  We use the same method as in Sec.~\ref{integrability_test_deform}: we exhibit a length-three Pauli-string operator in the commutator generated by the map $\iota_2$ that cannot be produced by any two-local correction. For $S_X$, the three-local part of $[\iota_2(h_{S_X}^{(2)}),H_{S_X}]$ contains $-2acd(IZ)(XX)(ZZ)$ and $2abd(ZZ)(YY)(ZI)$. Neither string can be produced by $[\mathbf A^{(2)},H_{S_X}]$. For instance, $(IZ)(XX)(ZZ)$ cannot arise from $[\mathbf A_n^{(2)},h^{(2)}_{S_X,n+1}]$ because no right operator in $h_{S_X}^{(2)}$ is $ZZ$. It also cannot arise from $[h^{(2)}_{S_X,n},\mathbf A_{n+1}^{(2)}]$.  In this overlap, the first composite-site operator $IZ$ forces the Hamiltonian term to be $(IZ)(YY)$.  The desired middle operator would then require a one-site commutator of $YY$ with a Pauli string to produce $XX$. Since $YY$ and $XX$ commute, this overlap cannot generate the obstruction. The same argument applies to the operator $(ZZ)(YY)(ZI)$. 

For $S_Y$, one obtains the $X\leftrightarrow Y$ counterpart, namely the operators $-2acd(IZ)(YY)(ZZ)$ and $2abd(ZZ)(XX)(ZI)$ from $[\iota_2(h_{S_Y}^{(2)}),H_{S_Y}]$, which are likewise outside the image of $\mathbf A^{(2)}\mapsto[\mathbf A^{(2)},H_{S_Y}]$. Thus Eq.~\eqref{eq:Hokkyo_nonintegrable_inequality} has no solution. For $b\neq c$, the two models have no nontrivial local conserved charges and are therefore nonintegrable.
\end{proof}

The classification of the sixteen minimal injective models based on these integrability and nonintegrability criteria is summarized in Table~\ref{table:min_H_classfication}. The eight undetermined minimal models violate the Reshetikhin condition and also fail the two-local conservation condition required by Hokkyo's criterion. Preliminary numerical searches indicate that these models possess nontrivial three-local conserved charges but do not fit the Yang-Baxter integrable structures examined here. We leave the detailed study of these special integrable models to future work. 

\begin{table}[t]
\centering
\renewcommand{\arraystretch}{1.25}
\caption{Classification of the sixteen minimal injective models defined in Eq.~\eqref{eq:h_S^2} and extracted from the deformed Fredkin spin model. The two nonintegrable models have $b\neq c$. The integrable class consists of models that satisfy the Reshetikhin condition. The nonintegrable class consists of models that satisfy Hokkyo's criterion. The undetermined class consists of models that fail both the Reshetikhin condition and the two-local conservation condition required by Hokkyo's criterion.}
\label{table:min_H_classfication}
\begin{tabular}{c|c}
\hline\hline
Class & Label index $S$ \\
\hline
Integrable &
\begin{tabular}[c]{@{}c@{}}
$\{6,8,10,12\}$, $\{6,9,10,13\}$\\
$\{6,9,11,12\}$, $\{7,8,10,13\}$\\
$\{7,8,11,12\}$, $\{7,9,11,13\}$
\end{tabular} \\
\hline
Nonintegrable &
$\{6,8,11,13\}$, $\{7,9,10,12\}$
\\
\hline
Undetermined &
\begin{tabular}[c]{@{}c@{}}
$\{6,8,10,13\}$, $\{6,8,11,12\}$\\
$\{6,9,10,12\}$, $\{6,9,11,13\}$\\
$\{7,8,10,12\}$, $\{7,8,11,13\}$\\
$\{7,9,10,13\}$, $\{7,9,11,12\}$
\end{tabular} \\
\hline\hline
\end{tabular}
\end{table}

\subsection{Integrability tests for the translation-invariant three-site models}
\label{subsec:three_site_model_tests}

Having classified the sixteen minimal injective models in the composite-spin representation, we now return to the physical spin-$1/2$ chain. As shown in Sec.~\ref{subsec:minimal_models_qubit_representation}, only four of the minimal injective models admit a spin-$1/2$ realization with three-site interactions. Their integrability can be determined rigorously, as stated in the following theorem. 

\begin{table*}[t]
\centering
\renewcommand{\arraystretch}{1.35}
\caption{Classification of the four minimal three-site spin-$1/2$ models.  Here $c_1c_2\neq0$.}
\label{tab:minimal_models_summary}
\begin{tabular}{c|c|c|c}
\hline\hline
~Model~ & ~Label index~ $S$ & Three-site density $\widetilde h_j^{(3)}$ & Class\\
\hline
$\widetilde H_{\rm I}$ & $\{6,8,10,12\}$ & $c_1X_jX_{j+1}Z_{j+2}+c_2Z_jX_{j+1}X_{j+2}$ & Integrable \\
$\widetilde H_{\rm II}$ & $\{6,8,11,13\}$ & $c_1X_jX_{j+1}Z_{j+2}+c_2Z_jY_{j+1}Y_{j+2}$ & ~Nonintegrable for $c_1\neq c_2$~ \\
$\widetilde H_{\rm III}$ & $\{7,9,10,12\}$ & $c_1Y_jY_{j+1}Z_{j+2}+c_2Z_jX_{j+1}X_{j+2}$ & ~Nonintegrable for $c_1\neq c_2$~ \\
$\widetilde H_{\rm IV}$ & $\{7,9,11,13\}$ & $c_1Y_jY_{j+1}Z_{j+2}+c_2Z_jY_{j+1}Y_{j+2}$ & Integrable \\
\hline\hline
\end{tabular}
\end{table*}

\begin{theorem}[Integrability classification of three-site models]
\label{thm:three_site_integrability}
The four models $\widetilde H_{\rm I}$, $\widetilde H_{\rm II}$, $\widetilde H_{\rm III}$, and $\widetilde H_{\rm IV}$ with the Hamiltonian densities listed in Table~\ref{tab:minimal_unblocked_models} are classified as follows: $\widetilde H_{\rm I}$ and $\widetilde H_{\rm IV}$ are integrable, while $\widetilde H_{\rm II}$ and $\widetilde H_{\rm III}$ are nonintegrable for generic couplings $c_1c_2(c_1-c_2)\neq0$.
\end{theorem}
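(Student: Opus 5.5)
The plan is to reduce everything to the composite-site results already proved, via the blocking isomorphism of Sec.~\ref{deformed_Fredkin_model} and Lemma~\ref{theorem:unblocking_to_three_site_qubit}.

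\textbf{Step 1: translate the coefficients.} By Eq.~\eqref{eq:unblocked_partner_pairs}, $\widetilde H_{\rm I}$ with density $c_1XXZ+c_2ZXX$ blocks, for even $N=2L$, to the composite model $H_S$ with $S=\{6,8,10,12\}$ and coefficients $(a,b,c,d)$ proportional to $(c_1,c_1,c_2,c_2)$. Any signs coming from the $\gamma_\alpha$ conventions in Table~\ref{tab:fredkin_terms_binary} are absorbed into the partner-pair identification. In the same way:
\begin{itemize}
\item $\widetilde H_{\rm IV}$ corresponds to $\{7,9,11,13\}$;
\item $\widetilde H_{\rm II}$ corresponds to $S_X=\{6,8,11,13\}$;
\item $\widetilde H_{\rm III}$ corresponds to $S_Y=\{7,9,10,12\}$.
\end{itemize}
In every case $a=b=c_1$ and $c=d=c_2$. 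Since $c_1c_2\neq0$, all four $g_\alpha$ are nonzero, so each blocked density is one of the sixteen minimal injective models of Theorem~\ref{theorem:minimal_injective_fredkin_four_terms}. Because blocking is a local algebra isomorphism, commutators, conservation laws and locality (up to a factor of two in range) are preserved.

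\textbf{Step 2: the integrable models I and IV.} Theorem~\ref{thm:reshetikhin_integrable_minimal_sets} holds for arbitrary nonzero $a,b,c,d$, and in particular on the partner-pair subfamily. So the Reshetikhin charge $Q_S^{(3)}$ from Table~\ref{tab:reshetikhin_minimal_term_sets} is conserved, and Theorem~\ref{theorem:Reshetikhin_rigorous} gives an infinite commuting hierarchy of composite-local charges. These are local operators on the physical chain.

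One subtlety remains: these charges are a priori invariant only under two-site translations. I would handle it in one of two ways.
\begin{itemize}
\item Observe that the boost construction builds each charge from $H$ alone. Since $H$ is invariant under the one-site shift $\mathcal T$, the resulting charges inherit this invariance.
\item Alternatively, replace each $Q$ by $Q+\mathcal T Q\mathcal T^{-1}$, which is still conserved. One then checks, for example on the leading three-local part $\iota_2$, that infinitely many of these remain nonzero and linearly independent.
\end{itemize}

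\textbf{Step 3: the nonintegrable models II and III.} Theorem~\ref{thm:hokkyo_nonintegrable_minimal_sets} requires only $b\neq c$, which here reads $c_1\neq c_2$. For $c_1c_2(c_1-c_2)\neq0$, the composite chain therefore has no local conserved charge of composite range $3\le r\le L/2$. Any one-site translation-invariant local conserved charge of $\widetilde H_{\rm II}$ is in particular two-site translation invariant, so after blocking it is a composite charge of comparable range. Hence it cannot exist, apart from charges of composite range $\le2$. Those are excluded as nontrivial by the two-local conservation condition, $\mathcal B^{(2)}_{\le}=\mathbb C H^{(2)}$ for $b\neq c$, which forces them to be multiples of $H$ plus lower terms. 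Model III follows by the $X\leftrightarrow Y$ rotation.

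\textbf{Expected obstacle.} The main difficulty is the bookkeeping between the two notions of translation invariance and of range.
\begin{itemize}
\item On the integrable side, one must make sure the composite tower yields infinitely many independent one-site-invariant charges.
\item On the nonintegrable side, one must check that physical ranges up to $N/2$ map into composite ranges covered by Theorem~\ref{theorem:Hokkyo_main_result}, handling the boundary ranges near $L/2$ and odd $N$ by a separate remark.
\end{itemize}
I would first try to settle the integrable side through the explicit one-site invariance of the boost-generated charges.
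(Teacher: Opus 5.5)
Your reduction is the same as the paper's. $\widetilde H_{\rm I}$ and $\widetilde H_{\rm IV}$ become the Reshetikhin sets at $a=b=c_1$, $c=d=c_2$. $\widetilde H_{\rm II}$ and $\widetilde H_{\rm III}$ become $S_X,S_Y$, with $b\neq c$ becoming $c_1\neq c_2$.

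The step that does not go through as written is how you dispose of charges of composite range $\le 2$. The condition $\mathcal B^{(2)}_{\le}=\mathbb C H^{(2)}$ fixes only the strictly two-local part. After subtracting $\lambda H$ you are left with a candidate $R=\sum_n R_n$, with $R_n$ traceless on one composite site, and nothing you invoke forces $R=0$. Such an $R$ is not a harmless ``lower term'' of $H$. Physically it is a one- or two-site operator such as $\sum_j Z_j$ or $\sum_n Z_{2n-1}Z_{2n}$, hence a genuinely nontrivial charge if conserved. Injectivity does not exclude it either, since $[R_n,h^{(2)}_n]$ and $[R_{n+1},h^{(2)}_n]$ can cancel each other, as $\sum_j Z_j$ does for the Heisenberg chain.

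The paper closes this step by expanding $R_n$ in the fifteen nonidentity two-qubit Pauli strings. It shows that the system $[R_n+R_{n+1},h^{(2)}_n]=0$ has full rank $15$ whenever $c_1c_2(c_1-c_2)\neq0$; this is where the coupling condition enters a second time. The check is not vacuous in this family. A direct computation gives $[\sum_j Z_jZ_{j+1},\widetilde H_{\rm II}]=2i(c_1-c_2)\sum_j\bigl(X_jY_{j+1}+Z_jY_{j+1}X_{j+2}Z_{j+3}\bigr)$. So at $c_1=c_2$ a two-site charge appears, with composite pieces $(ZZ)_n$ and $(IZ)_n(ZI)_{n+1}$. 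For the bare conclusion ``no infinite tower,'' a bounded-range, finite-dimensional leftover would be harmless. But your proposal asserts that these charges are trivial, and that requires the rank computation.

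On the integrable side, the paper does not require the higher charges to be one-site invariant. It transports the composite Reshetikhin charge, writes $\widetilde Q^{(5)}_{\rm I}$ explicitly, and invokes Theorem~\ref{theorem:Reshetikhin_rigorous}, so your extra step is optional. Your first justification for it is, however, not correct: the composite boost $\sum_n n\,h^{(2)}_n$ depends on the blocking, not on $H$ alone. Indeed, $\sum_n\iota_2(h^{(2)})_n=\sum_j[\widetilde h^{(3)}_j,\widetilde h^{(3)}_{j+2}]+\sum_n[\widetilde h^{(3)}_{2n},\widetilde h^{(3)}_{2n+1}]$ contains a piece living only on even starting sites. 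One-site invariance of $Q_S^{(3)}$ appears only after $\mathbf A^{(2)}_S$ supplies the odd-site partner, which produces the paper's $\sum_j\mathbf A^{(4)}_j$. If you want one-site-invariant charges at all orders, your symmetrization option is the sound one, together with an independence check.
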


\begin{proof}
The grouping map is a local algebra isomorphism, so it preserves commutators and maps local charges to local charges, changing their range only by a finite factor. For
\begin{equation}
    \widetilde H_{\rm I}=\sum_j \left(c_1X_jX_{j+1}Z_{j+2}+c_2Z_jX_{j+1}X_{j+2} \right),
\end{equation}
the Hamiltonian density in the composite-spin representation is
\begin{equation}
\label{eq:block_density_I}
    h_{\rm I,n}^{(2)}=c_1(T_6+T_8)+c_2(T_{10}+T_{12}).
\end{equation}
The model satisfies the Reshetikhin condition listed in Table~\ref{tab:reshetikhin_minimal_term_sets}. With $a=b=c_1$ and $c=d=c_2$, the required correction operator is
\begin{equation}
\label{eq:A2_unblocked_I}
    \mathbf A_{\rm I}^{(2)}=2ic_1^2(XI)(YZ)-2ic_2^2(ZY)(IX)
\end{equation}
for the conservation of the three-local charge. In the original spin-$1/2$ representation, the conserved charge is
\begin{equation}
    \widetilde Q_{\rm I}^{(5)}
    =\sum_j\left(
    [\widetilde h_{{\rm I},j}^{(3)},\widetilde h_{{\rm I},j+2}^{(3)}]
    +\mathbf A_j^{(4)}\right),
\end{equation}
where the correction operator becomes
\begin{equation}
    \mathbf A_j^{(4)}
    =2ic_1^2X_jI_{j+1}Y_{j+2}Z_{j+3}
    -2ic_2^2Z_jY_{j+1}I_{j+2}X_{j+3}.
\end{equation}

The same argument applies to $\widetilde H_{\rm IV}$, which has the Hamiltonian density
\begin{equation}
\label{eq:block_density_IV}
    h_{{\rm IV},n}^{(2)}=c_1(T_7+T_9)+c_2(T_{11}+T_{13}).
\end{equation}
The correction operator for the Reshetikhin condition, as listed in Table~\ref{tab:reshetikhin_minimal_term_sets}, is
\begin{equation}
\label{eq:A2_unblocked_IV}
    \mathbf A_{\rm IV}^{(2)}=-2ic_1^2(YI)(XZ)+2ic_2^2(ZX)(IY).
\end{equation}
Thus $\widetilde H_{\rm I}$ and $\widetilde H_{\rm IV}$ inherit conserved charges from the composite representation and are integrable.

For $\widetilde H_{\rm II}$ and $\widetilde H_{\rm III}$, the Hamiltonian densities are respectively
\begin{subequations}
\label{eq:block_densities_II_III}
\begin{align}
    h_{\rm II}^{(2)}
    & = c_1(T_6+T_8)+c_2(T_{11}+T_{13}), \label{eq:block_density_II}\\
    h_{\rm III}^{(2)}
    & = c_1(T_7+T_9)+c_2(T_{10}+T_{12}). \label{eq:block_density_III}
\end{align}
\end{subequations}
These are exactly the models $S_X$ and $S_Y$ defined in Eq.~\eqref{eq:SX_SY_def}. For $c_1\neq c_2$, Theorem~\ref{thm:hokkyo_nonintegrable_minimal_sets} establishes the two-local conservation condition and the absence of an admissible three-local charge.  Hence there are no composite conserved charges of length $3\le k\le L/2$.

It remains to exclude charges that become at most two-local after grouping, since such charges could still be nontrivial in the original spin-$1/2$ representation. Let $Q$ be such a conserved operator for either $\widetilde H_{\rm II}$ or $\widetilde H_{\rm III}$. Its strictly two-local part must lie in $\mathcal B_{\le}^{(2)}$, hence is proportional to the Hamiltonian itself.  Subtracting the corresponding multiple of the Hamiltonian leaves a possible one-site conserved operator
\begin{equation}
    R=\sum_nR_n,
    \qquad
    R_n\in\operatorname{span}\{P_n\;|\;P\in\mathcal P_2,\ P\neq II\}.
\end{equation}
The conservation equation is
\begin{equation}
\label{eq:one_local_conservation_unblocked}
    [R_n+R_{n+1},h_{\mu,n}^{(2)}]=0.
\end{equation}
Expanding $R_n$ in the two-qubit Pauli-string basis gives a homogeneous system of linear equations. For both models $\widetilde H_{\rm II}$ and $\widetilde H_{\rm III}$, this system has full rank $15$ whenever $c_1c_2(c_1-c_2)\neq0$, so $R=0$.  Hence no nontrivial lower-range conserved charge survives. The models are therefore nonintegrable for $c_1\neq c_2$ and provide minimal nonintegrable models with three-site interactions obtained from the Fredkin spin chain studied here. The results are summarized in Table~\ref{tab:minimal_models_summary}.
\end{proof}

\section{Frustration-graph analysis of the minimal models}
\label{Sec:test_frustration_graph}

The preceding charge-based analysis classified the minimal injective models into integrable and nonintegrable classes. We now give a complementary frustration-graph interpretation of this classification. In Sec.~\ref{subsec:frustration_graph_composite}, we analyze the models in the composite-spin representation and show that the integrable classes decompose into FFD-type graph components, whereas the nonintegrable models contain induced claws. In Sec.~\ref{subsec:frustration_graph_three_site}, we repeat the analysis for the physical spin-$1/2$ chain and identify the corresponding graph structures for the four three-site models.

\subsection{Frustration graphs of composite-spin models}
\label{subsec:frustration_graph_composite}

A nearest-neighbor Hamiltonian in the composite-spin representation is built from Pauli-string terms $T_{\alpha,n}=(P_\alpha)_n(Q_\alpha)_{n+1}$, where $(P_\alpha,Q_\alpha)$ are listed in Table~\ref{tab:fredkin_terms_binary}. Two vertices anticommute only when the corresponding strings overlap on one or two composite sites.  In terms of the binary symplectic product, the two Pauli strings $T_{\alpha,n}$ and $T_{\beta,n}$ are connected if
\begin{equation}
\langle P_\alpha,P_\beta\rangle_{\rm sp}
    +\langle Q_\alpha,Q_\beta\rangle_{\rm sp}=1.
\end{equation}
Similarly, for the Pauli strings $T_{\alpha,n}$ and $T_{\beta,n+1}$, the connectivity condition is
\begin{equation}
    \langle Q_\alpha,P_\beta\rangle_{\rm sp}=1. 
\end{equation}

We first focus on the six integrable models listed in Table~\ref{tab:reshetikhin_minimal_term_sets}, which satisfy the Reshetikhin condition. Their frustration graphs split into two classes:
\begin{subequations}
\begin{equation}
\label{eq:C_A}
\mathcal C_{\rm A}=\{\{6,9,10,13\},\{7,8,11,12\}\},
\end{equation}
\begin{multline}
\label{eq:C_B}
\mathcal C_{\rm B}=\{\{6,8,10,12\},\{6,9,11,12\}, \\
 \{7,8,10,13\},\{7,9,11,13\}\}.
\end{multline}
\end{subequations}
For class $\mathcal C_{\rm A}$, the graph is a disjoint union of four identical chains, one for each term in the Hamiltonian density. Each chain has edges $T_{\alpha,n}\sim T_{\alpha,n+1}$, where the symbol $\sim$ denotes connectivity, as shown in Fig.~\ref{fig:fg-composite-tableIII}(a).  

For class $\mathcal C_{\rm B}$, the graph is a disjoint union of two identical triangulated ladders. It therefore suffices to display one representative ladder, generated by two terms \(T_{\alpha_1}\) and \(T_{\alpha_2}\), whose local edges are $T_{\alpha_1,n}\sim T_{\alpha_1,n+1}$, $T_{\alpha_2,n}\sim T_{\alpha_2,n+1}$, $T_{\alpha_1,n}\sim T_{\alpha_2,n}$, and $T_{\alpha_1,n}\sim T_{\alpha_2,n+1}$. The representative ladder is shown in Fig.~\ref{fig:fg-composite-tableIII}(b). Both graph types are locally claw-free and even-hole-free. The frustration-graph criterion therefore agrees with the Reshetikhin condition and identifies these models as FFD-type integrable models.

The two nonintegrable models are $S_X=\{6,8,11,13\}$ and $S_Y=\{7,9,10,12\}$. They are related by $X\leftrightarrow Y$ and have isomorphic frustration graphs. For $S_X$, the local edges are
\begin{equation}
\begin{array}{@{}l@{\qquad}l@{}}
T_{\alpha_1,n}\sim T_{\alpha_2,n}, &
T_{\alpha_1,n}\sim T_{\alpha_3,n}, \\
T_{\alpha_2,n}\sim T_{\alpha_4,n}, &
T_{\alpha_3,n}\sim T_{\alpha_4,n}, \\
T_{\alpha_1,n}\sim T_{\alpha_1,n+1}, &
T_{\alpha_1,n}\sim T_{\alpha_2,n+1}, \\
T_{\alpha_2,n}\sim T_{\alpha_2,n+1}, &
T_{\alpha_3,n}\sim T_{\alpha_1,n+1}, \\
T_{\alpha_3,n}\sim T_{\alpha_3,n+1}, &
T_{\alpha_3,n}\sim T_{\alpha_4,n+1}, \\
T_{\alpha_4,n}\sim T_{\alpha_2,n+1}, &
T_{\alpha_4,n}\sim T_{\alpha_4,n+1}.
\end{array}
\end{equation}
The corresponding frustration graph is shown in Fig.~\ref{fig:fg-composite-tableIII}(c).  The induced subgraph with center $T_{\alpha_3,n}$ and leaves $T_{\alpha_1,n}$, $T_{\alpha_4,n}$, and $T_{\alpha_3,n+1}$ is a claw. Thus these two models are not covered by the FFD integrability criterion. For $b\neq c$, this result is consistent with the charge-based proof that they are genuinely nonintegrable.

\begin{figure}[t]
\centering
\begin{tikzpicture}[x=1.05cm,y=0.675cm,
  v/.style={circle,fill=black,inner sep=1.1pt},
  e/.style={line width=0.35pt},
  lab/.style={font={\normalfont\normalsize},fill=white,inner sep=0.45pt},
  panel/.style={font={\normalfont\normalsize},inner sep=0pt}]
% Upper: one chain for class C_A.
\node[panel,anchor=east] at (-1.75,0) {(a)};
\node[lab,anchor=east] at (-0.45,0) {$\mathcal C_{\rm A}$};
\foreach \i in {0,1,2,3,4}{
  \node[v] (a\i) at ({1.0*\i},0) {};
}
\foreach \i/\j in {0/1,1/2,2/3,3/4}{
  \draw[e] (a\i)--(a\j);
}

% Middle: one triangulated ladder for class C_B.
\begin{scope}[yshift=-1.875cm]
\node[panel,anchor=east] at (-1.75,0.15) {(b)};
\node[lab,anchor=east] at (-0.45,0.15) {$\mathcal C_{\rm B}$};
\foreach \i in {0,1,2,3,4}{
  \node[v] (b1\i) at ({1.0*\i},0.65) {};
  \node[v] (b2\i) at ({1.0*\i},-0.35) {};
}
\foreach \i/\j in {0/1,1/2,2/3,3/4}{
  \draw[e] (b1\i)--(b1\j);
  \draw[e] (b2\i)--(b2\j);
  \draw[e] (b1\i)--(b2\j);
}
\foreach \i in {0,1,2,3,4}{
  \draw[e] (b1\i)--(b2\i);
}
\end{scope}

% Lower: the Hokkyo-nonintegrable claw graph.
\begin{scope}[yshift=-4.20cm,
  v/.style={circle,fill=black,inner sep=1.05pt},
  hnode/.style={circle,draw=red,fill=white,inner sep=1.35pt,line width=0.55pt},
  e/.style={line width=0.30pt,black!65},
  he/.style={line width=0.85pt,red}]
\node[panel,anchor=east] at (-1.75,0.8) {(c)};
\node[lab,anchor=east] at (-0.45,0) {$\begin{gathered}S_X\\ S_Y\end{gathered}$};
\foreach \i in {0,1,2}{
  \node[v] (c1\i) at ({1.45*\i+0.32},1.05) {};
  \node[v] (c2\i) at ({1.45*\i+0.64},0.35) {};
  \node[v] (c3\i) at ({1.45*\i},-0.35) {};
  \node[v] (c4\i) at ({1.45*\i+0.32},-1.05) {};
  \draw[e] (c1\i)--(c2\i)--(c4\i)--(c3\i)--(c1\i);
}
\foreach \i/\j in {0/1,1/2}{
  \draw[e] (c1\i)--(c1\j);
  \draw[e] (c2\i)--(c2\j);
  \draw[e] (c3\i)--(c3\j);
  \draw[e] (c4\i)--(c4\j);
  \draw[e] (c1\i)--(c2\j);
  \draw[e] (c3\i)--(c1\j);
  \draw[e] (c3\i)--(c4\j);
  \draw[e] (c4\i)--(c2\j);
}
\draw[he] (c31)--(c11);
\draw[he] (c31)--(c41);
\draw[he] (c31)--(c32);
\foreach \p in {c11,c41,c31,c32}{\node[hnode] at (\p) {};}
\end{scope}
\end{tikzpicture}
\caption{Frustration graphs of the minimal Hamiltonians in the composite two-qubit representation. Only finite windows are shown.  (a) The class $\mathcal C_{\rm A}$ defined in Eq.~\eqref{eq:C_A}. (b) The class $\mathcal C_{\rm B}$ defined in Eq.~\eqref{eq:C_B}.  (c) The graphs for the Hamiltonian densities $S_X$ and $S_Y$ defined in \eqref{eq:SX_SY_def}. In panel (c), the red edges and vertices exhibit an induced claw.}
\label{fig:fg-composite-tableIII}
\end{figure}

\subsection{Frustration graphs of the three-site models}
\label{subsec:frustration_graph_three_site}

We next focus on the four three-site models listed in Table~\ref{tab:minimal_models_summary}. To keep the frustration-graph notation compact, denote the two local terms in each Hamiltonian density by
\begin{equation}
\label{eq:tableIV-pq-definition}
\begin{array}{ll}
\widetilde H_{\rm I}:&(p_j,q_j)=(X_jX_{j+1}Z_{j+2},Z_jX_{j+1}X_{j+2}),\\
\widetilde H_{\rm II}:&(p_j,q_j)=(X_jX_{j+1}Z_{j+2},Z_jY_{j+1}Y_{j+2}),\\
\widetilde H_{\rm III}:&(p_j,q_j)=(Y_jY_{j+1}Z_{j+2},Z_jX_{j+1}X_{j+2}),\\
\widetilde H_{\rm IV}:&(p_j,q_j)=(Y_jY_{j+1}Z_{j+2},Z_jY_{j+1}Y_{j+2}),
\end{array}
\end{equation}
with $\widetilde H_\mu=\sum_j(c_1p_j+c_2q_j)$.  Each isolated $p$ or $q$ family forms an FFD chain, with local frustration-graph edges $p_j\sim p_{j+1}$, $p_j\sim p_{j+2}$, $q_j\sim q_{j+1}$, and $q_j\sim q_{j+2}$. 

For $\widetilde H_{\rm I}$ and $\widetilde H_{\rm IV}$, the two operator families commute with one another:
\begin{equation}
\label{eq:tableIV-decoupled}
    [p_j,q_{j'}]=0,
    \qquad \forall j,j'.
\end{equation}
Thus the graph is a disjoint union of two identical FFD components.  One representative component is shown in Fig.~\ref{fig:fg-tableIV}(a).  This agrees with the conserved-charge analysis, which identifies these two models as integrable.

For the remaining two models, $\widetilde H_{\rm II}$ and $\widetilde H_{\rm III}$, the $p$ and $q$ families are coupled by additional mixed anticommutation edges $p_j\sim q_j$ and $p_j\sim q_{j-2}$. The resulting graph is shown in Fig.~\ref{fig:fg-tableIV}(b).  The induced subgraph on $p_j$, $p_{j-2}$, $p_{j+1}$, and $q_j$ is a claw. Therefore these two models do not satisfy the FFD frustration-graph criterion. For $c_1\neq c_2$, the charge-based analysis above establishes that they are genuinely nonintegrable. As above, these are local open-chain graph representations.  Closing a finite periodic chain may introduce additional global cycles, but these do not change the local claw structure relevant here.

\begin{figure}[t]
\centering
\begin{tikzpicture}[x=0.625cm,y=0.55cm,
  v/.style={circle,fill=black,inner sep=1.05pt},
  e/.style={line width=0.35pt},
  lab/.style={font={\normalfont\normalsize}},
  panel/.style={font={\normalfont\normalsize},inner sep=0pt}]
\node[panel,anchor=west] at (-1.45,1.50) {(a)};
\foreach \i in {0,1,2}{
  \node[v] (fo\i) at ({1.20*\i},0.45) {};
  \node[v] (fe\i) at ({1.20*\i+0.60},-0.40) {};
}
\foreach \i/\j in {0/1,1/2}{
  \draw[e] (fo\i)--(fo\j);
  \draw[e] (fe\i)--(fe\j);
  \draw[e] (fe\i)--(fo\j);
}
\foreach \i in {0,1,2}{\draw[e] (fo\i)--(fe\i);}
\node[lab,anchor=east] at (-0.35,0.45) {$p_{2m+1}$};
\node[lab,anchor=east] at (-0.35,-0.40) {$p_{2m}$};
\node[lab] at (1.50,-1.10) {$\widetilde H_{\rm I},\ \widetilde H_{\rm IV}$};

\begin{scope}[xshift=3.50cm,yshift=-0.69cm,
  v/.style={circle,fill=black,inner sep=1.05pt},
  hnode/.style={circle,draw=red,fill=white,inner sep=1.45pt,line width=0.55pt},
  e/.style={line width=0.32pt,black!70},
  mixv/.style={line width=0.45pt,black!55},
  mixd/.style={line width=0.45pt,black!55,dashed},
  he/.style={line width=0.90pt,red},
  lab/.style={font={\normalfont\normalsize}}]
\node[panel,anchor=west] at (-1.45,2.75) {(b)};
\foreach \i in {0,1,2}{
  \node[v] (po\i) at ({1.20*\i},2.30) {};
  \node[v] (pe\i) at ({1.20*\i+0.60},1.45) {};
}
\foreach \i/\j in {0/1,1/2}{
  \draw[e] (po\i)--(po\j);
  \draw[e] (pe\i)--(pe\j);
  \draw[e] (pe\i)--(po\j);
}
\foreach \i in {0,1,2}{\draw[e] (po\i)--(pe\i);}
\foreach \i in {0,1,2}{
  \node[v] (qo\i) at ({1.20*\i},-0.15) {};
  \node[v] (qe\i) at ({1.20*\i+0.60},-1.00) {};
}
\foreach \i/\j in {0/1,1/2}{
  \draw[e] (qo\i)--(qo\j);
  \draw[e] (qe\i)--(qe\j);
  \draw[e] (qe\i)--(qo\j);
}
\foreach \i in {0,1,2}{\draw[e] (qo\i)--(qe\i);}
\foreach \i in {0,1,2}{
  \draw[mixv] (po\i)--(qo\i);
  \draw[mixv] (pe\i)--(qe\i);
}
\foreach \i/\j in {0/1,1/2}{
  \draw[mixd] (qo\i)--(po\j);
  \draw[mixd] (qe\i)--(pe\j);
}
\draw[he] (po1)--(po0);
\draw[he] (po1)--(pe1);
\draw[he] (po1)--(qo1);
\foreach \p in {po1,po0,pe1,qo1}{\node[hnode] at (\p) {};}
\node[lab,anchor=south] at (po1) {$p_3$};
\node[lab,anchor=south] at (po0) {$p_1$};
\node[lab,anchor=north west] at (pe1) {$p_4$};
\node[lab,anchor=north] at (qo1) {$q_3$};
\node[lab,anchor=west] at (3.40,2.30) {$p_{2m+1}$};
\node[lab,anchor=west] at (4.00,1.45) {$p_{2m}$};
\node[lab,anchor=west] at (3.40,-0.15) {$q_{2m+1}$};
\node[lab,anchor=west] at (4.00,-1.00) {$q_{2m}$};
\node[lab] at (1.50,-1.85) {$\widetilde H_{\rm II},\ \widetilde H_{\rm III}$};
\end{scope}
\end{tikzpicture}
\caption{Frustration graphs of the four three-site spin-$1/2$ models listed in Table~\ref{tab:minimal_models_summary}.  (a) One representative FFD structure for the integrable models $\widetilde H_{\rm I}$ and $\widetilde H_{\rm IV}$; the second structure is identical and disconnected.  (b) The graph of the models $\widetilde H_{\rm II}$ and $\widetilde H_{\rm III}$, where mixed edges generate an induced claw.}
\label{fig:fg-tableIV}
\end{figure}

\section{Conclusions and outlook}
\label{sec:conclusion}

We have studied how integrability breaks down in translationally invariant spin-$1/2$ chains with genuine three-site interactions.  The starting point was the deformed Fredkin chain, whose Hamiltonian is composed of several integrable building blocks, including XXZ-type, Ising-type, and FFD-type terms. By grouping two physical spins into one composite site, we mapped the model to a nearest-neighbor chain with local dimension four and applied Hokkyo's nonintegrability criterion. We verified injectivity and the two-local conservation condition, and then found an explicit three-local obstruction that cannot be cancelled by any two-local correction. This proves, as stated in Theorem~\ref{theorem:nonintegrable_deformed_Fredkin}, that the deformed Fredkin chain admits no infinite tower of local conserved charges for any nonzero deformation parameter, extending the known nonintegrability of the original Fredkin chain to the deformed family.

We then derived a necessary and sufficient condition for minimal injective models, proved in Theorem~\ref{theorem:m_qubit_minimal_injectivity}. Using this criterion, we identified the minimal injective models contained in the composite Fredkin Hamiltonian. For a two-qubit composite site, injectivity requires four Pauli-string terms, and the Fredkin density contains exactly sixteen such minimal models. Among them, six satisfy the Reshetikhin condition and are integrable (Theorem~\ref{thm:reshetikhin_integrable_minimal_sets}), while two satisfy the assumptions of Hokkyo's criterion and are nonintegrable for generic coefficients (Theorem~\ref{thm:hokkyo_nonintegrable_minimal_sets}). The remaining eight fail the two-local conservation condition and are therefore not classified by this direct criterion. Imposing the additional unblocking condition for one-site translation-invariant spin-$1/2$ chains with genuine three-site interactions leaves only four models. Two of them, $\widetilde H_{\rm I}$ and $\widetilde H_{\rm IV}$, are integrable and decompose into two commuting FFD components. The remaining two models, $\widetilde H_{\rm II}$ and $\widetilde H_{\rm III}$, are nonintegrable for $c_1c_2(c_1-c_2)\neq0$ and constitute the minimal nonintegrable three-site spin-$1/2$ Hamiltonians identified within this Fredkin-derived class.

The frustration-graph analysis gives a complementary picture of the same classification. The integrable composite models have claw-free and even-hole-free local graphs, corresponding to decoupled chains or triangulated ladders of FFD type. By contrast, the nonintegrable models contain induced claws, both in the composite representation and the spin-$1/2$ three-site representation. Although the frustration-graph criterion is only sufficient, this agreement shows that the charge-based obstruction has a simple local graph-theoretic signature in the present family.

Several questions remain open. First, the symmetric point $c_1=c_2$ of the two three-site models $\widetilde H_{\rm II}$ and $\widetilde H_{\rm III}$ is not resolved by the present proof, because the two-local conservation condition fails there. Preliminary numerical tests indicate the existence of nontrivial local conserved charges, although these charges do not appear to be of Yang-Baxter type. A similar classification problem remains open for the eight models listed in Table~\ref{table:min_H_classfication}. Second, the present classification is Fredkin-derived: it identifies all minimal injective models inside the deformed Fredkin model, but it is not yet a complete classification of all three-site Pauli Hamiltonians.  Extending the binary-symplectic analysis to arbitrary three-site Pauli strings would give a broader map of the boundary between integrability and nonintegrability. Finally, it would be useful to connect these charge-based results more directly to dynamical diagnostics such as ETH behavior, operator spreading, and transport. Such a connection would clarify how the integrability breaking found here manifests itself in nonequilibrium physics.

\begin{acknowledgments}
The authors thank Balázs Pozsgay for helpful discussions. This work was funded by the NSFC (Grant Nos. 12305028, 12275215, 12275214, and 12247103), and the Youth Innovation Team of Shaanxi Universities. KZ is funded by the China Postdoctoral Science Foundation under Grant No. 2025M773421, Shaanxi Province Postdoctoral Science Foundation under Grant No. 2025BSHYDZZ017, and Scientific Research Program Funded by Education Department of Shaanxi Provincial Government (Program No. 24JP186). VK is funded by the U.S. Department of Energy, Office of Science, National Quantum Information Science Research Centers, Co-Design Center for Quantum Advantage (C2QA) under Contract No. DE-SC0012704.
\end{acknowledgments}

\appendix

\section{Technical details of Hokkyo's criterion}
\label{App:Hokkyo_sets}

This appendix summarizes the technical structures behind Hokkyo's criterion \cite{hokkyo2025rigorous}. Recall the auxiliary space $\mathcal B_{\le}^{(k)}$ defined in Eq.~\eqref{eq:B_le^k}.  An element of $\mathcal B_{\le}^{(k)}$ is a $k$-local operator whose commutator with the Hamiltonian has length at most $k$, meaning that all potentially generated $(k+1)$-local leading terms cancel.

Under the injectivity condition \eqref{eq:injectivity}, there is a natural map from $\mathcal B_{\le}^{(k-1)}$ to $\mathcal B_{\le}^{(k)}$,
\begin{equation}
\label{def:iota_map}
    \iota_{k-1}:\mathcal B_{\le}^{(k-1)}\longrightarrow\mathcal B_{\le}^{(k)},
\end{equation}
whose local action is
\begin{equation}
    \iota_{k-1}(\mathbf A_j^{(k-1)})=\left[\mathbf A_j^{(k-1)},h_{j+k-2}^{(2)}\right],
\end{equation}
where $\mathbf A_j^{(k-1)}\in \mathcal B_{\le}^{(k-1)}$.
This map captures the leading part of possible higher-range local conserved charges. 

When the two-local conservation condition $\mathcal B_{\le}^{(2)}=\mathbb C H^{(2)}$ holds, the three-local leading part of any nontrivial conserved charge is fixed, up to normalization, by the translational sum of the local density $\iota_2(h^{(2)})$. To decide whether this leading part can be completed to a conserved charge, one introduces
\begin{multline}
\label{B_lt_k}
    \mathcal B_{<}^{(k)}
    =\bigl\{\mathbf A^{(k)}+\widetilde{\mathbf A}^{(k-1)}\mid
    \mathbf A^{(k)}\in\mathcal B_{\le}^{(k)},\\
    \widetilde{\mathbf A}^{(k-1)}\in\mathcal B^{(k-1)},\ 
    \mathrm{len}([\mathbf A^{(k)}\!+\!\widetilde{\mathbf A}^{(k-1)},H])\le k-1\bigr\}.
\end{multline}
Thus $\mathcal B_{<}^{(k)}$ consists of $k$-local leading pieces that can be supplemented by a lower-range correction so that the remaining commutator has length at most $k-1$. Hokkyo's theorem states that, under injectivity and the two-local conservation condition, failure of this completion already at $k=3$ propagates to all higher $k$: if no two-local correction can reduce the commutator of the translational sum of $\iota_2(h^{(2)})$ to length at most two, then there are no nontrivial conserved charges of any range $3\le k\le N/2$.

%This is the mechanism used repeatedly in the main text.  For the deformed Fredkin chain and for the two minimal models, we first verify Eq.~\eqref{eq:2-local_condition}; then we identify an explicit three-local Pauli string in $[\iota_2(h^{(2)}),H]$ that cannot be produced by $[\mathbf A^{(2)},H]$ for any two-local $\mathbf A^{(2)}$.  That single obstruction is enough to rule out an entire tower of local conserved quantities.

\section{Analytic check of the Reshetikhin condition}
\label{App:reshetikhin_S6810}

We give a direct analytic verification of the Reshetikhin condition for the representative index set $S=\{6,8,10,12\}$. The remaining index sets in Table~\ref{tab:reshetikhin_minimal_term_sets} can be checked in the same way. We keep the composite-site notation of the main text and write
\begin{multline}
\label{App:eq:S6810_density}
    h_{S,n}^{(2)}=a(IX)_n(XZ)_{n+1}
    +b(XX)_n(ZI)_{n+1}\\
    +c(IZ)_n(XX)_{n+1}
    +d(ZX)_n(XI)_{n+1},
\end{multline}
with $abcd\neq0$. The two-local correction is
\begin{equation}
\label{App:eq:S6810_correction}
    \mathbf A_{S,n}^{(2)}
    =2iab(XI)_n(YZ)_{n+1}
    -2icd(ZY)_n(IX)_{n+1}.
\end{equation}
We show that the following three-local charge candidate,
\begin{equation}
\label{App:eq:S6810_charge}
    Q_S^{(3)}
    =\sum_n\left([h_{S,n}^{(2)},h_{S,n+1}^{(2)}]
    +\mathbf A_{S,n}^{(2)}\right)
\end{equation}
commutes with $H_S=\sum_nh_{S,n}^{(2)}$.  As in the main text, each elementary Pauli commutator carries a common factor $2i$.  To display the exact operator cancellation, we keep this overall factor below.

The local density $\iota_2(h_S^{(2)})_n$ is
\begin{multline}
\label{App:eq:S6810_iota}
    \iota_2(h_S^{(2)})_n=2ia^2(IX)_n(XY)_{n+1}(XZ)_{n+2}\\
    +2iab(IX)_n(IY)_{n+1}(ZI)_{n+2}\\
    +2ib^2(XX)_n(YX)_{n+1}(ZI)_{n+2}\\
    -2ic^2(IZ)_n(XY)_{n+1}(XX)_{n+2}\\
    -2icd(IZ)_n(YI)_{n+1}(XI)_{n+2}\\
    -2id^2(ZX)_n(YX)_{n+1}(XI)_{n+2}.
\end{multline}
We first compare the strictly three-local terms in
$[\sum_n\iota_2(h_S^{(2)})_n,H_S]$ with those generated by
$[\sum_n\mathbf A_{S,n}^{(2)},H_S]$.  The nontrivial mixed contributions from the $\iota_2$ part are
\begin{subequations}
\label{App:eq:S6810_three_boost}
\begin{equation}
\begin{array}{cccc}
2ia^2&(IX) & (XY) & (XZ)\\
b&(XX) & (ZI) & {}\\ \hline
-2ia^2b&(XI) & (YY) & (XZ)
\end{array}
\end{equation}
\begin{equation}
\begin{array}{cccc}
2ib^2&(XX) & (YX) & (ZI)\\
a&{} & (IX) & (XZ)\\ \hline
2iab^2&(XX) & (YI) & (YZ)
\end{array}
\end{equation}
\begin{equation}
\begin{array}{cccc}
-2ic^2&(IZ) & (XY) & (XX)\\
d&(ZX) & (XI) & {}\\ \hline
-2ic^2d&(ZY) & (IY) & (XX)
\end{array}
\end{equation}
\begin{equation}
\begin{array}{cccc}
-2id^2&(ZX) & (YX) & (XI)\\
c&{} & (IZ) & (XX)\\ \hline
2icd^2&(ZX) & (YY) & (IX).
\end{array}
\end{equation}
\end{subequations}
The corresponding strictly three-local terms from the correction $\mathbf A_{S,n}^{(2)}$ are
\begin{subequations}
\label{App:eq:S6810_three_correction}
\begin{equation}
\begin{array}{cccc}
2iab&(XI) & (YZ) & {}\\
a&{} & (IX) & (XZ)\\ \hline
2ia^2b&(XI) & (YY) & (XZ)
\end{array}
\end{equation}
\begin{equation}
\begin{array}{cccc}
2iab&{} & (XI) & (YZ)\\
b&(XX) & (ZI) & {}\\ \hline
-2iab^2&(XX) & (YI) & (YZ)
\end{array}
\end{equation}
\begin{equation}
\begin{array}{cccc}
-2icd&(ZY) & (IX) & {}\\
c&{} & (IZ) & (XX)\\ \hline
2ic^2d&(ZY) & (IY) & (XX)
\end{array}
\end{equation}
\begin{equation}
\begin{array}{cccc}
-2icd&{} & (ZY) & (IX)\\
d&(ZX) & (XI) & {}\\ \hline
-2icd^2&(ZX) & (YY) & (IX).
\end{array}
\end{equation}
\end{subequations}
Equations~\eqref{App:eq:S6810_three_boost} and \eqref{App:eq:S6810_three_correction} cancel term by term. The remaining length-three contributions are of the free-fermion-in-disguise type and reduce, after the translational sum, to shorter-range operators.

It remains to cancel the two-local terms. From the $\iota_2$ part one obtains
\begin{subequations}
\label{App:eq:S6810_two_boost}
\begin{equation}
\begin{array}{cccc}
2iab&(IX) & (IY) & (ZI)\\
a&(IX) & (XZ) & {}\\ \hline
2ia^2b&{} & (XX) & (ZI)
\end{array}
\end{equation}
\begin{equation}
\begin{array}{cccc}
2iab&(IX) & (IY) & (ZI)\\
b&{} & (XX) & (ZI)\\ \hline
-2iab^2&(IX) & (XZ) & {}
\end{array}
\end{equation}
\begin{equation}
\begin{array}{cccc}
-2icd&(IZ) & (YI) & (XI)\\
c&(IZ) & (XX) & {}\\ \hline
2ic^2d&{} & (ZX) & (XI)
\end{array}
\end{equation}
\begin{equation}
\begin{array}{cccc}
-2icd&(IZ) & (YI) & (XI)\\
d&{} & (ZX) & (XI)\\ \hline
-2icd^2&(IZ) & (XX) & {}
\end{array}
\end{equation}
\end{subequations}
The identity entries displayed as empty slots indicate that the result is a two-site density. The two-local correction $\mathbf A_{S,n}^{(2)}$ gives the opposite densities:
\begin{subequations}
\label{App:eq:S6810_two_correction}
\begin{equation}
\begin{array}{ccc}
2iab&(XI) & (YZ)\\
a&(IX) & (XZ)\\ \hline
-2ia^2b&(XX) & (ZI)
\end{array}
\end{equation}
\begin{equation}
\begin{array}{ccc}
2iab&(XI) & (YZ)\\
b&(XX) & (ZI)\\ \hline
2iab^2&(IX) & (XZ)
\end{array}
\end{equation}
\begin{equation}
\begin{array}{ccc}
-2icd&(ZY) & (IX)\\
c&(IZ) & (XX)\\ \hline
-2ic^2d&(ZX) & (XI)
\end{array}
\end{equation}
\begin{equation}
\begin{array}{ccc}
-2icd&(ZY) & (IX)\\
d&(ZX) & (XI)\\ \hline
2icd^2&(IZ) & (XX).
\end{array}
\end{equation}
\end{subequations}
After relabeling the summation index, Eq.~\eqref{App:eq:S6810_two_correction} cancels Eq.~\eqref{App:eq:S6810_two_boost} term by term. Combining this two-local cancellation with the strictly three-local cancellation above gives
\begin{equation}
    [Q_S^{(3)},H_S]=0.
\end{equation}
Thus the model with index set $S=\{6,8,10,12\}$ satisfies the Reshetikhin condition for arbitrary nonzero coefficients $a,b,c,d$.

% Bibliography
%\bibliography{ref.bib}

%apsrev4-2.bst 2019-01-14 (MD) hand-edited version of apsrev4-1.bst
%Control: key (0)
%Control: author (8) initials jnrlst
%Control: editor formatted (1) identically to author
%Control: production of article title (0) allowed
%Control: page (0) single
%Control: year (1) truncated
%Control: production of eprint (0) enabled
%

\end{document}